\documentclass[12pt]{article}
\usepackage{amsmath, amssymb}
\usepackage{graphicx}
\usepackage{natbib}
\usepackage{url} 
\usepackage{booktabs,multirow,multicol}

\usepackage{xcolor}

\renewcommand{\baselinestretch}{1.5}

\begin{document}

\def\spacingset#1{\renewcommand{\baselinestretch}%
{#1}\small\normalsize} \spacingset{1}

\newtheorem{lem}{Lemma}
\newtheorem{cor}{Corollary}
\newtheorem{prp}{Proposition}
\newtheorem{asm}{Assumption}
\newtheorem{thm}{Theorem}
\newtheorem{rem}{Remark}


  \title{\bf Outcome-dependent density combination by mixture Bayesian predictive synthesis}
  \author{Yuta Matano\\
    Graduate School of Economics, University of Tokyo \\ 
    \url{matano-yuta729@g.ecc.u-tokyo.ac.jp} \\
    and \\
    Kaoru Irie\\
    Faculty of Economics, University of Tokyo \\ 
    \url{irie@e.u-tokyo.ac.jp} \\
    }
    \date{}
  \maketitle

\bigskip
\begin{abstract}
In the problem of density combination, the decision maker receives multiple density functions and is tasked to learn which density is used most in which region of the outcome space, and to combine them into a single density. 
The outcome-dependent density combination can be cast as the Bayesian predictive synthesis, where the decision maker's uncertainty is naturally integrated. 
The models for the outcome-dependent weights have been considered in \cite{johnsonwest2025outcome}, but limited to the Gaussian density kernel. 
In this study, we point out the limitation of the model flexibility of the Gaussian weights, and theoretically and numerically verify that the estimated weights can be least outcome-dependent. As an alternative, we propose the use of the softmax (logistic-normal) functions as the models for the weights, together with the bias-adjustment to the submitted densities. The posterior computation is feasible by using the custom Metropolis-Hastings method with the P\'olya-gamma augmentation. The utility of the proposed method is illustrated in the real data analysis of movie theatre revenues, where a human expert reports the point quantile forecasts without any uncertainty quantification. 
\end{abstract}

\noindent%
{\it Keywords:} Density combination; outcome-dependent mixture model; Bayesian predictive synthesis; softmax transformation; P\'olya-gamma mixture; Quantile combination.
\vfill

\spacingset{1.5} 

\clearpage 
\section{Introduction}

In predictive analysis, it is common to have multiple predictive distributions of the quantity of interest from different sources of information, such as models, computer algorithms, and/or forecasts by human experts. 
For example, macroeconomic indices such as GDP can be forecast by multiple macroeconometric models (e.g., \citealt{chernis2022nowcasting}), in addition that the expectations of professionals on those indices are surveyed and integrated into the model-based predictions (e.g., \citealt{adrian2025scenario}). 
Since various models and experts provide different forecasts, it is reasonable to assume that one forecast describes the typical values, while another is informative about outliers. 
Considering the diversity of the provided predictive distributions, we, as the decision maker, are then tasked to combine these distributions into a single distribution in an outcome-dependent way, while learning which distribution is relevant in which region of the outcome space. 

Standard density-combination methods offer only a partial answer to this problem of outcome-dependent density combination. 
Linear and logarithmic pools, together with score-based prediction pools, are widely used to construct aggregate density forecasts \citep{genestzidek1986combining,hallmitchell2007combining,gewekeamisano2011optimal,gneitingranjan2013combining}. 
However, these methods consider the mixture of the densities using the constant, outcome-independent weights. 
To address the outcome-dependent weights, the generalized density forecast combination of \citet{kapetanios2015generalised} partitions the outcome space and assigns region-specific weights to the candidate densities. 
However, such pooling rules are not, by themselves, Bayesian updating: they construct an aggregate density directly rather than deriving a posterior predictive distribution conditional on their information set.

A natural Bayesian solution to the problem of density combination is the Bayesian model averaging, where the prior over the models, algorithms and experts is assumed in order to derive the combined density as the posterior. However, it is well known that the predictive density based on the model averaging is outcome-independent. 
In this context, Bayesian predictive synthesis (BPS) considers the prior over the densities/distributions, providing an outcome-dependent but coherent way of combining the densities. 
Originating in Bayesian agent opinion analysis \citep{genest1985modeling,West1992d,westcrosse1992opinion}, BPS introduces a custom latent variable for each density and combines them through a synthesis function. 
The influential dynamic BPS model of \citet{mcalinn2019dynamic} develops this idea through a dynamic regression synthesis function. 
Subsequent work extends the BPS framework in several directions, including multivariate synthesis \citep{mcalinn2020multivariate}, sequential computation \citep{masudairie2023sequential}, and application-specific synthesis models \citep{cabel2022bayesian,sugasawa2023causal}. 
However, modeling of the synthesis function for the outcome-dependent density combination has not been explored extensively. 

In this study, we consider the BPS-based approach to the outcome-dependent density combination, including the models for weight functions, posterior computation and illustration by applications. 
One key development in the BPS-based density combination is the outcome-dependent generalized opinion pools of \citet{johnsonwest2025outcome}, covering issues related to uncertainty quantification such as model-set incompleteness. 
The outcome-dependent weights are modeled by using the Gaussian density kernel, so that each candidate density is used most in some neighborhood, but less outside. 
Although the Gaussian weights were reported to provide reasonable predictions in time series analysis, their direct application to density combination was not studied. In fact, we have confirmed in simulation studies that the estimates of the Gaussian weights can be constant, hence outcome-independent. We mathematically verify this phenomenon using the theory of posterior concentration in a large sample situation. 

We then consider an alternative BPS-based model for density combination, beyond the Gaussian models. First, we enable the synthesis function to account for our uncertainty on the reported forecast densities, as considered in \cite{johnson2017phdthesis}. This modification not only improves the model flexibility but also simplifies the posterior computation by the Markov chain Monte Carlo method. Second, we propose to use softmax functions, or logistic-normal density functions, as the model for outcome-dependent weights. Unlike the Gaussian weights, the softmax weights are unbounded, allowing the sharp switch from one candidate density to another. The posterior computation is straightforward using the P\'olya-gamma augmentation technique \citep{polson2013bayesian}. Taken together, the flexibility of the proposed model can handle a challenging problem of density combination, such as the combination of point forecasts about density quantiles, as illustrated later using the dataset of movie theatre revenue. 

The remainder of the paper is organized as follows. Section~2 reviews density forecast combinations, BPS, and outcome-dependent mixture synthesis. Section~3 studies Gaussian-kernel outcome-dependent weights and their undesirable behavior in a large sample situation. Section~4 defines the softmax weight models and discusses their difference from the Gaussian weights. Section~5 develops custom Markov chain Monte Carlo algorithms for posterior computation. Section~6 presents the results of numerical experiments for comparative studies. Section~7 is an illustration of the proposed method in a real data analysis. Section~8 concludes with possible future research.

\section{Bayesian predictive synthesis}

In this section, we briefly review the framework of Bayesian predictive synthesis (BPS). The density combination is viewed as the BPS of a certain type and called mixture BPS. Specifically, we consider two models of the mixture BPS: Dirac and bias-adjusted synthesis functions. 

\subsection{Density combination as BPS}

Suppose that we are interested in the prediction of univariate quantity $y$ and, for this predictive purpose, given $K$ probability densities, $h_1(\cdot ), \dots , h_K(\cdot)$. These densities define the information set, denoted by $\mathcal{H}=\{ h_1,\dots, h_K \}$. 
As noted in the introduction, these densities can be provided from different sources, including statistical/econometric models, deterministic algorithms and/or human experts, which are all called {\it agents}.  
Since $\mathcal{H}$ is known, the uncertainty about $y$ should be quantified by its conditional distribution, or ``posterior'', as $p(y|\mathcal{H})$. In modeling this distribution, however, one must specify the joint distribution of $y$ and $\mathcal{H}$, which is often a difficult task. Instead, a series of research on predictive synthesis by \cite{genest1985modeling} and \cite{westcrosse1992opinion} considered a situation where such distribution is specified only {\it partially}. It has been proven that, for such a model to be coherent in uncertainty quantification, the posterior distribution of interest must be of the form, 
\begin{equation*} 
    p(y|\mathcal{H}) = \int \alpha (y|x) h(x) dx,
\end{equation*}
where $x=(x_1,\dots,x_K)^{\top}$, 
$\alpha(y\mid x)$ is a conditional density called the synthesis function, and 
\begin{equation*}
     h(x) = \prod _{k=1}^K h_k(x_k).
\end{equation*}
This representation implies that modeling of $(y,\mathcal{H})$ is equivalent to specifying $\alpha (y|x)$. \cite{mcalinn2017dynamic}, \cite{mcalinn2019dynamic} and other recent research on BPS have studied various models for $\alpha (y|x)$ and the functional forms of the implied predictive density $p(y|\mathcal{H})$.

\subsection{Modeling of synthesis functions} \label{sec:syngen}

By choosing synthesis function $\alpha(y|x)$ appropriately, typical models for density combination can be viewed as BPS, hence justified as a coherent way of quantifying predictive uncertainty. 
In modeling the synthesis function, we specify a class of functions indexed by unknown quantities, $\theta$, and parametrize the synthesis function as $\alpha (y|x,\theta)$. Since $\theta$ is unknown to us, we quantify our uncertainty on $\theta$ via our posterior distribution using some test data. In this process of posterior computation, we evaluate the predictive performance of each agent in $\mathcal{H}$. 

The predictive analysis in the BPS framework is formulated as follows. The BPS equation is rewritten with explicit dependence on parameter $\theta$ as 
\begin{equation*} 
    p(y| \mathcal{H} , \theta ) = \int \alpha (y|x,\theta ) h(x) dx.
\end{equation*}
Then, test data of size $n$, or $y_1,\dots ,y_n$, are assumed to be generated independently from this distribution. With prior $\Pi(\theta)$, the posterior distribution of $\theta$ is obtained as 
\begin{equation*}
    p(\theta | y_{1:n} , \mathcal{H} ) \propto \Pi(\theta ) \prod _{i=1}^n \int \alpha (y_i|x_i, \theta ) h(x_i) dx_i,
\end{equation*}
where $x_i = (x_{i1},\dots ,x_{iK})^{\top}$ for $i=1,\dots ,n$. This density is often numerically approximated using a Markov chain Monte Carlo (MCMC) method with sampling of latent variables $x_{1:n}$ and parameter $\theta$. Finally, the predictive distribution of our interest is expressed as 
\begin{equation*}
    p(y|y_{1:n},\mathcal{H}) = \int\!\!\!\int \alpha (y|x,\theta ) h(x) p(\theta | y_{1:n} , \mathcal{H} ) dxd\theta ,
\end{equation*}
which is also evaluated by simulation. 

For synthesis function $\alpha (y|x,\theta )$, we consider two models that correspond to the outcome-dependent density combination. 

\paragraph{Dirac synthesis:} Following \cite{johnsonwest2025outcome}, we first consider the synthesis function of the form,
\begin{equation*}
    \alpha (y|x,\theta) = w_0(x;\theta ) h_0(y) + \sum _{k=1}^K w_k(x;\theta ) \delta _{x_k}(y),
\end{equation*}
where $\delta_{x_k}(y)$ is the degenerate distribution of $y$ at $x_k$, and $w_0(x ; \theta )$, $w_1(x ; \theta )$, $\dots$, $w_K(x ; \theta )$ are the weight functions; they are non-negative and sum to unity for any $x$ and $\theta$. 
Here, we have added an extra density, $h_0(y)$, to the mixture form. This density is called a baseline density, represents our own prior belief on $y$ and complements the incompleteness of $\mathcal{H}$ as the model set. 
The functional form of $w_k(\cdot ; \theta )$ will be given later in Sections~\ref{sec:gaussian} and \ref{sec:softmax}. We call this model {\it Dirac synthesis}. 

Note that, in general, the agent $k$'s weight function can depend on not only $x_k$ but also the other $x_{k'}$ ($k'\not=k$). If one assumes the independence of the agents in the weights and restricts the weight function as $w_k(x_k;\theta)$, the integral in $p(y|\mathcal{H},\theta )$ is simplified as 
\begin{equation} \label{eq:diracpred}
    p(y|\mathcal{H},\theta ) = w_0(\theta ) h_0(y) + \sum _{k=1}^K w_k(y;\theta ) h_k(y),
\end{equation}
where $w_0(\theta )$ is the integral of $w_0(x;\theta )$ with respect to $h(x)$. This is exactly the outcome-dependent density combination of $\mathcal{H}=\{ h_1,\dots, h_K \}$ (and baseline $h_0$) as desired. Note also that this form can be viewed as the outcome-{\it independent} combination of tilted agent forecasts, whose density kernels are given by $w_k(y;\theta) h_k(y)$. 

\paragraph{Bias-adjusted synthesis:} 
The synthesis function can be parametrized further to increase its predictive flexibility by replacing $\delta_{x_k}(y)$ in the Dirac synthesis with some continuous distribution as  
\begin{equation} \label{eq:biasad}
    \alpha (y|x,\theta) = w_0(x;\theta ) h_0(y) + \sum _{k=1}^K w_k(x;\theta ) \mathcal{N}(y|x_k-\beta_k,\tau_k^2),
\end{equation}
where $\mathcal{N}(y|\mu, \sigma^2)$ is the density of the normal distribution with mean $\mu$ and variance $\sigma^2$ evaluated at $y$. Compared to the Dirac synthesis, this functional form reflects our intention not to take the reported prediction $x_k$ from agent $h_k$ as-is, but to account for potential bias by $\beta_k$ and extra uncertainty by additive noise with variance $\tau^2_k$. The synthesis with location bias was considered in \cite{johnsonwest2025outcome} and extra uncertainty in \cite{johnson2017phdthesis} in the case of the Gaussian weight specification discussed later. We name this formulation as {\it Bias-adjusted synthesis}. 

The utility of this formulation is evident when the forecasts are reported as point predictions. Suppose that the $k$-th agent reports its forecast not in the form of a probability density, but a point forecast value, $\hat{y}_k$, with no information on its uncertainty. In the BPS framework, this scenario translates as $h_k(\cdot ) = \delta _{\hat{y}_k}(\cdot )$. In the Dirac synthesis, the combined density in (\ref{eq:diracpred}) is simply a mixture of those point masses (although it could be outcome-dependent). However, in the bias-adjusted synthesis, we have 
\begin{equation}\label{eq:deg_ba}
    p(y|\mathcal{H},\theta,\beta ,\tau ^2) = w_0(\hat{y}; \theta ) h_0(y) + \sum _{k=1}^K w_k(\hat{y};\theta ) \mathcal{N}(y|\hat{y}_k-\beta_k,\tau_k^2),
\end{equation}
where $\hat{y} = (\hat{y}_1, \ldots , \hat{y}_K)^\top$. In this expression, the point mass distribution is smoothed, similarly to the density estimation with the Gaussian kernels. 

Although $(\beta,\tau^2)$ is supposed to address the bias in point forecast and uncertainty quantification of each agent predictive density, the role of those parameters in fitting data $y$ overlaps with that of the other parameter $\theta$ in the weight function. Put differently, $\theta$ and $(\beta ,\tau ^2)$ could be unidentifiable. This conflict of the parameters is clarified by marginalizing $x$ out and rewriting the model in the outcome-{\it independent} form: 
\begin{equation*}
    p(y|\mathcal{H},\theta ,\beta ,\tau ^2) = w_0(\theta ) h_0(y) + \sum _{k=1}^K \int w_k(x ;\theta ) \mathcal{N}(x_k|y+\beta_k,\tau_k^2) h(x) dx. 
\end{equation*}
If the weight functions are modeled carefully, both Dirac and bias-adjusted synthesis models should fit to the data equivalently. 
Nevertheless, we consider the bias-adjusted synthesis for computational reasons; as pointed out later in Section~\ref{sec:mcmc}, the posterior computation under the Dirac synthesis by the MCMC method is complicated due to the degenerate distributions. 
To mitigate the problem of the identifiability, we use an informative (not diffuse) prior for $(\beta ,\tau ^2)$ and maintain the interpretability of the parameter $\theta$ and weight functions.

\section{Gaussian-kernel outcome-dependent weights}
\label{sec:gaussian}

In this section, we review the Gaussian-kernel specification for outcome-dependent BPS weights proposed by \citet{johnsonwest2025outcome}, then point out its potentially-undesirable behavior in a large-sample situation. 

\subsection{Gaussian weight functions}
\label{sec:gaussian-spec}

For each agent $k=1,\dots,K$, we specify the weight function by the Gaussian kernel as 
\begin{equation}
    w_k(x_k;\theta)
    = q_k \exp\left\{ -\frac{(x_k-\mu_k)^2}{2\sigma_k^2} \right\},
    \label{eq:gauss-weight}
\end{equation}
where $\theta=(q_{1:K},\mu_{1:K},\sigma_{1:K}^2)$, $\mu_k \in \mathbb{R}$, $\sigma_k>0$ and $q_{1:k}$ are the weights ($q_{1:k}$ are non-negative and sum to unity). The baseline weight is defined as the residual: $w_0(x;\theta)=1-\sum_{k=1}^K w_k(x_k;\theta)$.
Since $0\le w_k(x_k;\theta)\le q_k$ for every $x_k$, we have $w_0(x;\theta)\in[0,1]$ for every $x$, and $w_0,w_1,\dots ,w_K$ are on the simplex, assuring that $p(y|\mathcal{H},\theta)$ (or $p(y|\mathcal{H},\theta,\beta,\tau ^2)$) is a proper conditional density. Parameter $\mu_k$ is the point at which agent $k$ receives its maximal weight, $\sigma_k^2$ is a ``window'' that controls how quickly that weight decays away from $\mu_k$, and $q_k$ is the largest pointwise weight that agent $k$ can receive.

In analyzing the posterior of $\theta$ and/or $(\beta,\tau ^2)$, it is necessary to obtain the outcome-independent form by marginalizing latent $x$ out (i.e., the expression in (\ref{eq:diracpred})). The closed form is computed by algebra; see Section~\ref{supp:marginal}.

\subsection{Properties of Gaussian weights}
\label{sec:gaussian-weight}

The flexibility of the Gaussian specification (\ref{eq:gauss-weight}) is limited due to the weight cap $q_k$ and relation to baseline $h_0(\cdot)$ and window $\sigma_k$. 

\paragraph{Weight cap $q_k$.}
The weight function is inherently upper-bounded globally as 
\begin{equation*}
    0 \le w_k(x_k;\theta) \le q_k,
    \qquad
    \mathrm{for \ all \ } x_k \in \mathbb{R}. 
\end{equation*}
Hence $w_k(\cdot ;\theta)$ never exceeds $q_k$. In addition, if $q_k$ increases, then $q_{k'}$ must decrease for some $k'\not=k$, down-weighting the importance of $k'$ globally. 

To see this point, consider an example of $K=2$, where the data generating process uses the switching rule as the weights: 
$w_1^{\mathrm{sw}}(y)=\mathbb{I}(y<0)$ and
$w_2^{\mathrm{sw}}(y)=\mathbb{I}(y\ge0)$. 
In this example, agent 1 is used exclusively on the left side of the real line and agent 2 on the right. 
The Gaussian weights in (\ref{eq:gauss-weight}) cannot approximate this scenario satisfactorily, for the weight of agent 1 is at most $q_1$ and that of agent 2 is $q_2=1-q_1$ everywhere, no matter what values of $(\mu _{1:2},\sigma^2 _{1:2})$ are chosen. As seen in this example, the family of the Gaussian weights in fact excludes outcome-dependent ways of selecting agents by construction. Although not discussed here in detail, the same problem can be seen in other variants of Gaussian weights, including the cross-model weights in \cite{johnsonwest2025outcome}. 

\paragraph{Baseline and window $\sigma_k$.}
Parameter $\sigma_k$ is supposed to be a window that allocates the weight of agent $k$ around $\mu_k$. By lowering the value of $\sigma_k$, one might expect that agent $k$ is intensively used around $\mu_k$, but this interpretation is not necessarily correct.  

To see this point, take a simple example of the Dirac synthesis where the density of agent $k$ is also Gaussian; $h_k(x_k) = \mathcal{N}(x_k|m_k,s_k^2)$. Then, consider the weight in the outcome-independent form by marginalizing $x$ out. If $m_k=\mu_k$, then the tilted agent density of $k$ and its outcome-independent weight are computed as 
\begin{equation}
    \int w_k(x_k;\theta) \delta_{x_k}(y) h(x)dx = q_k \sqrt{\frac{\sigma_k^2}{\sigma_k^2+s_k^2}} \mathcal{N}\!\left( y\left| m_k, \frac{s_k^2\sigma_k^2}{s_k^2 + \sigma_k^2} \right.\right).
    \label{eq:c1-example}
\end{equation}
Here, the weight of agent $k$ is $q_k\sqrt{\sigma_k^2/(\sigma_k^2+s_k^2)}$, and it converges to zero as $\sigma_k^2\downarrow0$. 
Note also that the weights of the other agents do not increase when $\sigma_k^2\downarrow0$. 
Thus, the contribution of agent $k$ vanishes, and its share is reallocated to the baseline $h_0$. 
In \citet{johnsonwest2025outcome}, $h_0$ is introduced as a fallback component against model-set incompleteness, and a diffuse baseline is a natural choice for that role. Under the Gaussian weight specification, however, an overly diffuse $h_0$ could result in a diffuse predictive density, ignoring the agent densities in $\mathcal{H}$.

\subsection{Asymptotic posterior of weight parameters}
\label{sec:gaussian-asymptotics}

As seen later in Section~\ref{sec:gaussian-illustration}, the posterior estimate of $\sigma_k^2$ tends to be large, which makes the weight functions constant hence not outcome-dependent at all. 
We theoretically verify this numerical phenomenon through the posterior-concentration for the weight parameters in the large-sample situation. This theoretical result applies to a general class of the synthesis functions, including the Dirac/bias-adjusted synthesis with the Gaussian weights discussed here.

Recall from Section~2 that a parameterized synthesis function induces the likelihood of the form,
\begin{equation*}
    m_\theta(y)
    := p(y\mid\mathcal{H},\theta)
    = \int \alpha(y\mid x,\theta) h(x)\,dx.
\end{equation*}
The test data are assumed to be $Y_1,\ldots,Y_n\overset{\mathrm{iid}}{\sim}P^\ast$ (with density $p^{\ast}$). In this case, the likelihood is 
\begin{equation*}
    L_n(\theta) = \prod_{i=1}^n m_\theta(Y_i).
\end{equation*}
With the prior $\Pi$ for $\theta$ defined on a metric space $(\Theta,d)$, the posterior is obtained as $\Pi_n(\cdot\mid Y_{1:n},\mathcal{H})$, where information set $\mathcal{H}$ is fixed. Define the objective function as 
\[
    M(\theta)
    =
    \mathbb{E}_{P^\ast}\log m_\theta(Y)
    =
    \mathbb{E}_{P^\ast}\log p^\ast(Y)
    -
    \mathrm{KL}(p^\ast,m_\theta),
\]
assuming that the relevant expectations are finite.
The second term of the right is the Kullback-Leibler (KL) divergence from $p^{\ast}$ to $m_\theta$. 
Thus maximizing $M$ is equivalent to projecting the true predictive law onto the BPS family in KL divergence. Let
\begin{equation*}
    M^\star = \sup_{\theta\in\Theta} M(\theta),
    \qquad {\rm where} \quad
    \Theta^\star = \{\theta\in\Theta : M(\theta)=M^\star\}.
\end{equation*}
Under misspecification, $\Theta^\star$ is a set of pseudo-true parameter values.

\begin{prp}[Posterior concentration for parameterized BPS weights]
\label{prp:bps-weight-concentration}
Suppose that the following conditions hold: (i) $m_\theta$ is a proper density, $m_\theta(Y)>0$ $P^\ast$-almost surely, and $M(\theta)$ is finite for every $\theta\in\Theta$; (ii) the log score satisfies the uniform law of large numbers
\begin{equation*}
    \sup_{\theta\in\Theta}
    \left|
    \frac{1}{n}\sum_{i=1}^n \log m_\theta(Y_i) - M(\theta)
    \right|
    \to 0
    \quad\text{in } P^\ast\text{-probability};
\end{equation*}
(iii) $M^\star$ is finite and $\Theta^\star$ is nonempty; (iv) $M$ is well separated at $\Theta^\star$, in the sense that for every $\varepsilon>0$ there exists $\delta_\varepsilon>0$ such that
\begin{equation*}
    \sup_{\theta:\,d(\theta,\Theta^\star)\ge\varepsilon} M(\theta)
    \le M^\star - \delta_\varepsilon;
\end{equation*}
and (v) the prior charges all near-optimal sets,
\begin{equation*}
    \Pi\{\theta\in\Theta: M(\theta)\ge M^\star-\eta\}>0,
    \qquad \eta>0.
\end{equation*}
Then, for every $\varepsilon>0$,
\begin{equation*}
    \Pi_n\{\theta: d(\theta,\Theta^\star)\ge\varepsilon
    \mid Y_{1:n},\mathcal{H}\}
    \to 0
    \quad\text{in }P^\ast\text{-probability}.
\end{equation*}
In particular, if $\Theta^\star=\{\theta^\star\}$, the posterior is consistent at the pseudo-true value $\theta^\star$.
\end{prp}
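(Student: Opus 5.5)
The argument is the standard Wald/Schwartz-type consistency proof for misspecified models, using the uniform law of large numbers in (ii) to control the numerator and denominator of the posterior ratio directly. Set $U_\varepsilon=\{\theta: d(\theta,\Theta^\star)\ge\varepsilon\}$ and write
\begin{equation*}
    \Pi_n(U_\varepsilon\mid Y_{1:n},\mathcal{H})
    =\frac{\int_{U_\varepsilon} \exp\{\ell_n(\theta)\}\,\Pi(d\theta)}{\int_{\Theta} \exp\{\ell_n(\theta)\}\,\Pi(d\theta)},
    \qquad
    \ell_n(\theta)=\sum_{i=1}^n \log m_\theta(Y_i).
\end{equation*}
By condition (i), $\ell_n(\theta)$ is well defined and finite almost surely. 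The plan is to bound the numerator above by $\exp\{n(M^\star-\delta_\varepsilon+\eta)\}$ and the denominator below by a positive constant times $\exp\{n(M^\star-2\eta)\}$, on an event whose probability tends to one. Then $\eta$ is chosen small relative to $\delta_\varepsilon$.

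Fix $\varepsilon>0$ and let $\delta_\varepsilon$ be as in (iv). Choose $\eta=\delta_\varepsilon/4$ and consider the event
\begin{equation*}
    A_n=\Bigl\{\sup_{\theta\in\Theta}\bigl|n^{-1}\ell_n(\theta)-M(\theta)\bigr|\le\eta\Bigr\}.
\end{equation*}
By (ii), $P^\ast(A_n)\to1$.

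For the numerator, on $A_n$ every $\theta\in U_\varepsilon$ satisfies $n^{-1}\ell_n(\theta)\le M(\theta)+\eta\le M^\star-\delta_\varepsilon+\eta$, using (iv) together with the finiteness of $M^\star$ from (iii). The numerator is therefore at most $\Pi(U_\varepsilon)\exp\{n(M^\star-\delta_\varepsilon+\eta)\}\le \exp\{n(M^\star-\delta_\varepsilon+\eta)\}$.

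For the denominator, restrict the integral to $B_\eta=\{\theta: M(\theta)\ge M^\star-\eta\}$. On $A_n$, every $\theta\in B_\eta$ satisfies $n^{-1}\ell_n(\theta)\ge M^\star-2\eta$. The denominator is therefore at least $\Pi(B_\eta)\exp\{n(M^\star-2\eta)\}$, where $\Pi(B_\eta)>0$ by (v) and does not depend on $n$.

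Taking the ratio, on $A_n$ we get
\begin{equation*}
    \Pi_n(U_\varepsilon\mid Y_{1:n},\mathcal{H})\le \Pi(B_\eta)^{-1}\exp\{-n(\delta_\varepsilon-3\eta)\}=\Pi(B_\eta)^{-1}e^{-n\delta_\varepsilon/4}\to0 .
\end{equation*}
Since $P^\ast(A_n^c)\to0$, for any $c>0$ we have $P^\ast\{\Pi_n(U_\varepsilon\mid\cdot)>c\}\le P^\ast(A_n^c)+\mathbb{I}\{\Pi(B_\eta)^{-1}e^{-n\delta_\varepsilon/4}>c\}\to0$. This gives convergence in $P^\ast$-probability.

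The singleton case is immediate: when $\Theta^\star=\{\theta^\star\}$, we have $d(\theta,\Theta^\star)=d(\theta,\theta^\star)$, so the statement is posterior consistency at $\theta^\star$.

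I expect essentially no hard step, because the uniform LLN in (ii) is assumed rather than derived; it does all the work that would otherwise require bracketing or compactness arguments. The only points needing care are measure-theoretic. First, I would check measurability of $\sup_\theta$, and if needed interpret (ii) with outer probability. Second, I would check that the posterior is well defined, i.e.\ that the denominator is positive and finite. Positivity holds on $A_n$ by the lower bound above. For finiteness, the upper bound on $A_n$ gives $\ell_n(\theta)\le n(M^\star+\eta)$ uniformly in $\theta$, so the integral is finite. I would state both facts explicitly.
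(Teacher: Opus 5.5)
Your proposal is correct and is essentially the paper's own argument. Both use the same high-probability event from the uniform LLN in (ii), the same upper bound on the numerator over $\{d(\theta,\Theta^\star)\ge\varepsilon\}$, the same lower bound on the denominator restricted to $B_\eta$, and the resulting bound $\Pi(B_\eta)^{-1}\exp\{-n(\delta_\varepsilon-3\eta)\}$; your choice $\eta=\delta_\varepsilon/4$ is one admissible instance of the paper's $0<\eta<\delta_\varepsilon/3$, and your remarks on measurability and on the normalizing integral being positive and finite cover points the paper leaves implicit.
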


The proof and discussions, including the corollary on consistency and remarks on assumptions (ii)--(v), are given in Supplementary Section~\ref{supp:consistency}. 
Applying Proposition~\ref{prp:bps-weight-concentration} to the Gaussian weight reveals the asymptotic behavior of the posterior of window parameter $\sigma_k$. To cover the boundary, we set $u_k = 1/(1+\sigma_k^2)$, so that the support of $u_k$ is $[0,1]$ and compact, and $\sigma_k^2 = \infty$ and $\sigma_k^2 = 0$ correspond to $u_k=0$ and $u_k=1$, respectively. 
In this parametrization, the limits in Proposition~\ref{prp:bps-weight-concentration} for pseudo truth $u_k^{\star}\in \{ 0, 1\}$ imply the posterior concentration at boundaries. 
\begin{itemize}
    \item {\it Constant weights.} If $u_k^\star=0$ for all $k$, then posterior mass concentrates in a neighborhood of $\sigma_k^2=\infty$. The weight $w_k(x_k;\theta)$ approaches constant $q_k$ at every $x_k$, making the density combination less outcome-dependent. Note also that the baseline receives zero weight in the limit. In total, the limiting predictive density is nothing but the constant-weight linear pool. 
    \item {\it Dominant baseline.} If $u_k^\star=1$ for some $k$, then posterior mass concentrates in a neighborhood of $\sigma_k^2=0$. In this limit, $w_k(x_k;\theta)$ converges to zero for every $x_k\ne\mu_k$. 
    The exception is $x_k=\mu_k$, but this value of $x_k$ has probability zero under a continuous agent density. 
    Hence the component of agent $k$ vanishes in the mixture form in (\ref{eq:diracpred}). If this occurs for all agents, the predictive density reduces to the baseline $h_0$.
\end{itemize}

\subsection{Illustration: a two-part normal example}
\label{sec:gaussian-illustration}

We illustrate the aforementioned behavior of the posterior distribution under the Gaussian weights using a two-part normal example.
Suppose that the data generating process has density
\begin{equation*}
    p^\ast(y)
    =
    \frac{1}{3}\,
    \mathcal{TN}_{(-\infty,0]}(y\mid0,1^2)
    +
    \frac{2}{3}\,
    \mathcal{TN}_{[0,\infty)}(y\mid0,2^2),
\end{equation*}
where $\mathcal{TN}_A$ denotes the truncated normal density whose support is
$A\subset\mathbb{R}$.
The two agent densities are
\[
    h_1(x)=\mathcal{N}(x\mid0,1^2)
    \qquad {\rm and} \qquad 
    h_2(x)=\mathcal{N}(x\mid0,2^2).
\]
It would be ideal if the first agent were used on the left half line and the second agent on the right or, equivalently, if we could set the weight functions to the switching weights considered in Section~\ref{sec:gaussian-weight}:
\[
    w_1^{\mathrm{sw}}(y)=\mathbb{I}(y<0)
    \qquad {\rm and} \qquad 
    w_2^{\mathrm{sw}}(y)=\mathbb{I}(y\ge0).
\]

For the baseline, we consider two specifications: a diffuse baseline,
$h_0(y)=\mathcal{N}(y\mid0,10^2)$,
and a moment-matched baseline,
$h_0(y)=\mathcal{N}(y\mid m_B,s_B^2)$,
whose mean and variance match those of the data generating process $p^\ast$: $m_B=\sqrt{2/\pi}$ and $s_B^2=3-2/\pi$.
In this example, we use the Dirac synthesis to ensure the identification of the Gaussian weight parameters, including $\sigma_k$.

For each baseline
$b\in\{\mathrm{diff},\mathrm{match}\}$,
let $\theta_b^\star$ denote the corresponding pseudo-true value of parameter set
$\theta=(q_{1:2},\mu_{1:2},\sigma_{1:2}^2)$. 
We numerically compute $\theta_b^\star$ by maximizing the population expected log score defined in Proposition~\ref{prp:bps-weight-concentration}. 
Table~\ref{tab:klproj-main} summarizes the computed width parameters $\sigma_k^{2\star}$, the baseline share, and the expected log scores at the computed pseudo-truth $(\sigma_1^{2\star},\sigma_2^{2\star})$ and at the boundary
$(\sigma_1^2,\sigma_2^2)=(\infty,\infty)$.
For the diffuse baseline, we obtain $\sigma_1^{2\star}=\infty$ and $\sigma_2^{2\star}\simeq119$; the KL-projected values are at the boundary. 
The baseline share is about $0.021$, making the baseline negligible. The improvement of the objective function over the case of $(\sigma_1^2,\sigma_2^2)=(\infty,\infty)$ is small; the value of $\sigma_2^{2\star}\simeq119$ should be considered sufficiently large. 
In contrast, for the moment-matched baseline, the values of the pseudo truth are both finite: 
$(\sigma_1^{2\star},\sigma_2^{2\star})\simeq(1.74,6.39)$. 
The baseline is dominant in the combined density with weight $0.576$, and the difference from the case of $(\sigma_1^2,\sigma_2^2)=(\infty,\infty)$ becomes significant. 
For the pseudo-true quantities other than $\sigma_k^2$ , see Table~\ref{tab:supp-klproj-finite} in Section~\ref{supp:klproj}.

\begin{table}[htbp]
\centering
\caption{
Pseudo true values of $\sigma_k^2$, weight of baseline $h_0$, functional values of the objective function $M(\theta)$ at $(\sigma _1^2,\sigma_2^2) = (\sigma_1^{2\star},\sigma_2^{2\star})$ and $(\infty,\infty)$, and their difference. 
}
\label{tab:klproj-main}
\vspace{3pt}
\begin{tabular}{cccccc}
\hline
\multirow{2}{*}{Baseline}
& \multirow{2}{*}{$(\sigma_1^{2\star},\sigma_2^{2\star})$}
& Baseline share
& $\sup M(\theta)$
& $\sup M(\theta)$
& \multirow{2}{*}{Difference} \\
{}
& {}
& ($h_0$'s weight)
& at $(\sigma_1^{2\star},\sigma_2^{2\star})$
& at $\sigma_1^{2}=\sigma_2^{2}=\infty$
& \\
\hline
diffuse
& $(\infty,\ 119.23)$
& $0.0208$
& $-1.957243$
& $-1.958195$
& $9.53\times10^{-4}$ \\
matched
& $(1.736,\ 6.391)$
& $0.5765$
& $-1.828608$
& $-1.958195$
& $0.12959$ \\
\hline
\end{tabular}
\end{table}

The weight functions and predictive distributions at $\theta^{\star}_b$ are shown in Figure~\ref{fig:kl_illustration}.
Under the diffuse baseline, the first agent receives a constant weight and the second weight varies only mildly, so the synthesis remains close to a constant-weight pool.
Under the moment-matched baseline, both weights vary over the outcome space, and the baseline retains a substantial share.
Nevertheless, both baseline specifications fail to recover the desired left--right regional allocation of weights, motivating the development of a new class of weight functions.

\begin{figure}[htbp]
\centering
\includegraphics[width=\linewidth]{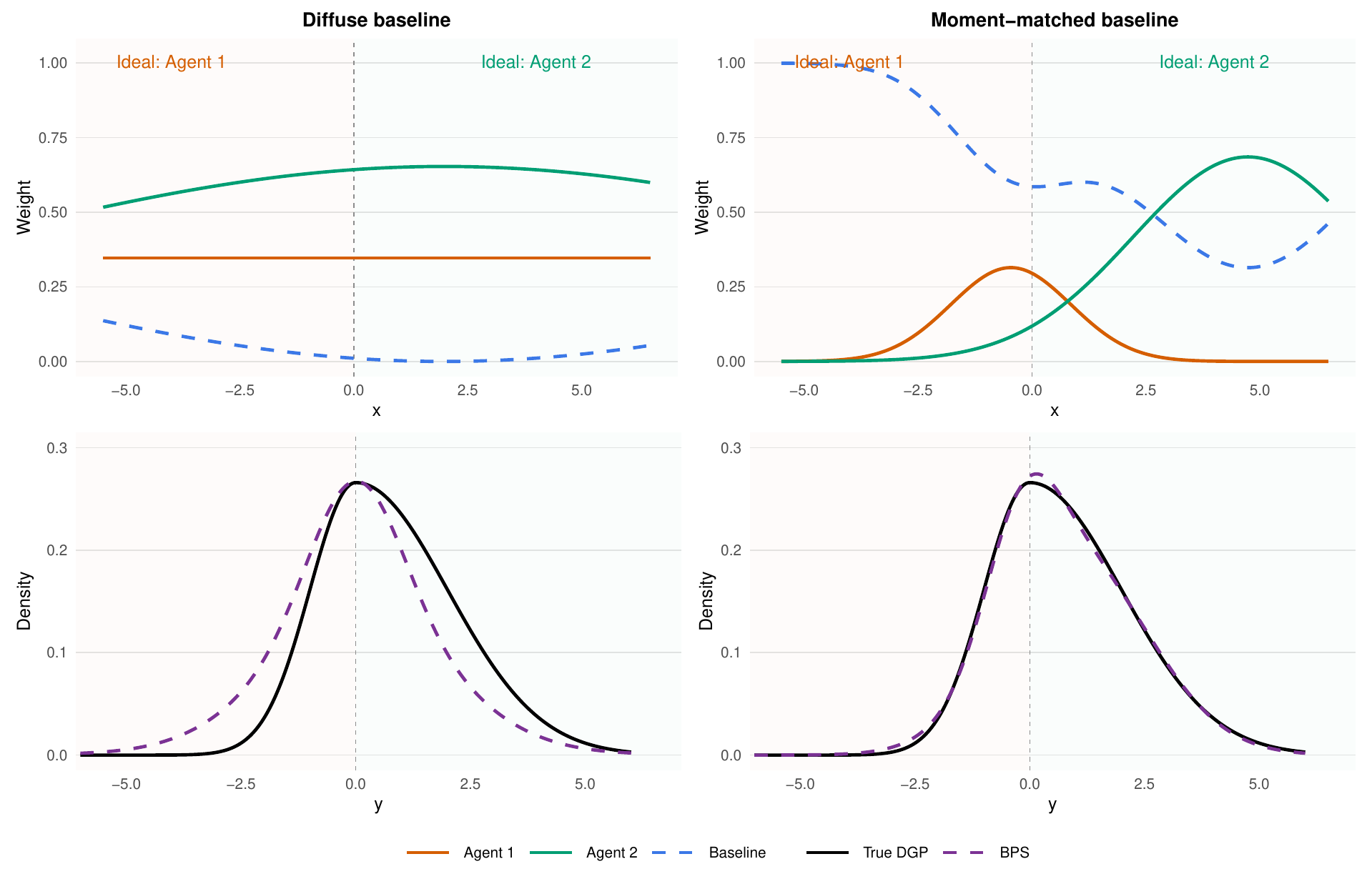}
\caption{
Dirac Gaussian-weights at the computed pseudo true value $\theta_b^{\star}$ (top) and the implied BPS density with the true density (bottom) in the two-part normal example. 
}
\label{fig:kl_illustration}
\end{figure}

\section{Softmax weight models}
\label{sec:softmax}

This section develops a new class of the weight functions. Based on our observation on the Gaussian weights in the previous section, we choose the softmax transformation of the agent scores, so that the weight of agent $k$ is dependent on $x$, or the relative performance of $k$ to the other agents.

\subsection{Softmax weight functions}
\label{sec:softmax-spec}

For each $k=1,\dots,K$, consider agent $k$'s score $\eta_k(x_k;\gamma_k) \in \mathbb{R}$, a function of latent state $x_k$ that is specified later. We denote its parameter by $\gamma_k$ rather than $\theta_k$. For convenience, we define the score of the baseline as $\eta_0\equiv0$. Then, we specify the weight function as the softmax transformation of the agent's scores: 
\begin{equation}
    w_k(x;\gamma)
    = \frac{\exp\{\eta_k(x_k;\gamma_k)\}}
      {1+\sum_{\ell=1}^K \exp\{\eta_\ell(x_\ell;\gamma_\ell)\}},
    \quad k=1,\ldots,K,
    \label{eq:softmax-weight}
\end{equation}
where $\gamma=(\gamma_1,\ldots,\gamma_K)$. We also set $w_0(x;\gamma)=1-\sum_{k=1}^K w_k(x;\gamma)$. The synthesis function under the bias-adjusted synthesis is of the form in (\ref{eq:biasad}), with $\theta$ being replaced with $\gamma$. 
By construction, the weight functions, $w_k(\cdot ; \gamma)$ for $k=0,1,\dots ,K$, are non-negative and sum to unity for any $x$ and $\gamma$. 

Unlike the Gaussian weights, the softmax weights depend on all components of $x$ through their relative scores and do not impose fixed agent-specific caps $q_k$.
A decline in one agent's score reallocates its weight across all competing components based on their relative scores rather than solely to the baseline.
Hence the softmax formulation is expected to be suited to examples such as one in Section~\ref{sec:gaussian-illustration}, where outcome-dependent way of combination of agent forecasts is truly necessary.

\subsection{Models of score functions}
\label{sec:softmax-scores}

The model for the score function, $\eta_k(x_k;\gamma_k)$, is arbitrary. In theory, one can even model it in a nonparametric way, using the basis function expansion in terms of $x_k$. In practice, the flexibility of the score function comes at the cost of inferential efficiency due to the increased number of parameters. Here, we focus on the first and second order polynomial scores: 
\begin{equation*}
    \eta_k(x_k;\gamma_k)
    =
    \gamma_{k0}+\gamma_{k1}x_k+\gamma_{k2}x_k^2 ,
\end{equation*}
where the first-order polynomial score is obtained with $\gamma_{k2}=0$. 
The first-order polynomial model is simply increasing or decreasing in $x_k$, differentiating the importance of agent $k$ in the left and right. The second-order polynomial allows each agent to favor the interior region or tails, realizing center-versus-tail competition of agents in density combination. 

An example of the functional forms of the implied weights is shown in Figure~\ref{fig:weight_illustration} for illustration. The difference between the Gaussian and softmax weights is clearly seen in this figure; the softmax weight is neither bounded, nor limited to a log-concave function. The first-order polynomial model is expected to accurately approximate the data generating function in the example of Section~\ref{sec:gaussian-illustration}.

\begin{figure}[htbp!]
\centering
\includegraphics[width=\linewidth]{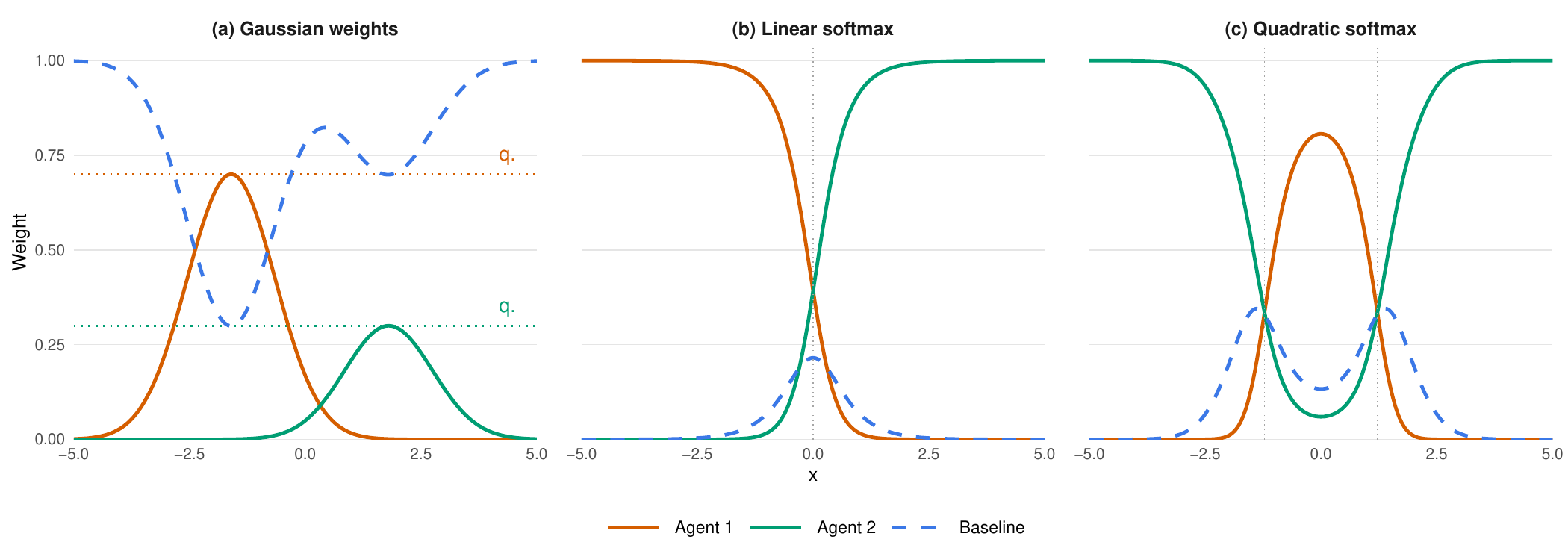} 
\caption{Examples of the weight functions $w_k$ of the two agents (solid) and baseline (dashed) for (a) the Gaussian weights, (b) the Softmax weights with the first-order polynomial score, and (c) that of the second-order polynomial score. In (a), we set $(q_1,q_2)=(0.7,0.3)$, $(\mu_1,\mu_2)=(-1.6,1.8)$, and $\sigma_1^2=\sigma_2^2=0.9$. In (b), the scores are $\eta_1(x)=0.6-2x$ and $\eta_2(x)=0.6+2x$. In (c), we use $\eta_1(x)=1.8-1.2x^2$ and $\eta_2(x)=-0.8+0.55x^2$.
}
\label{fig:weight_illustration}
\end{figure}

\section{Derivation of the MCMC algorithm}
\label{sec:mcmc}

\subsection{MCMC for the Gaussian weights}

An MCMC algorithm for posterior computation under a certain class of mixture BPS models was provided by \cite{johnsonwest2025outcome}. In their model, the synthesis function is of the Dirac form with the Gaussian weights. With an additional latent variable $z \in \{ 0,1,\dots ,K\}$, the synthesized predictive distribution in (\ref{eq:diracpred}) has the compositional form as 
\begin{equation*}
    \begin{split}
        &y|x,\mathcal{H},z{=}k \sim \delta _{x_k}(y) \qquad (k\not=0), \\
        &y|x,\mathcal{H},z{=}0 \sim h_0(y), \\
        &z|x,\theta \sim \mathrm{Categorical}( w_0(x;\theta),w_1(x_1;\theta),\dots , w_K(x_K;\theta) ), \\
        &x \sim h(x).
    \end{split}
\end{equation*}
The deterministic relationship between $y$ and $x$ in the degenerate Dirac distributions complicates the construction of valid Markov chains. In particular, conditional that $z=k$, the Dirac component imposes the deterministic constraint $y=x_k$. Conversely, given $x=(x_1,\dots,x_K)$ such that $y=x_k$, the conditional distribution of $z$ is degenerated at $z=k$. As a result, the iterative sampling from the full conditionals does not constitute a Markov chain converging to the target posterior distribution. 
To circumvent this difficulty, \cite{johnsonwest2025outcome} marginalized $x$ out when sampling from the full conditional of $z$. While this approach is expected to improve the efficiency of the MCMC method, the marginalization of $x$ requires the use of Monte Carlo integration {\it within} the iterations of MCMC, which is often computationally costly.

To avoid the need for nested Monte Carlo integration caused by the deterministic constraint of the Dirac synthesis, we instead work with the bias-adjusted synthesis in (\ref{eq:biasad}), using latent allocation variable $z$. 
Conditional on $z=k$ ($k=1,\ldots,K$), the observational equation becomes
\begin{equation*}
    y\mid x,\mathcal{H},z{=}k
    \sim
    \mathcal{N}(y\mid x_k-\beta_k,\tau_k^2).
\end{equation*}
With a prior for $(\beta_k,\tau_k^2)$, the full conditional distribution of $x$ is no longer degenerate, and that of $z$ is supported on $\{0,1,\ldots,K\}$. The MCMC algorithm for the Gaussian weights can be constructed easily by combining Gibbs and Metropolis--Hastings updates. For the full details of the algorithm, see Section~\ref{sec:supp-mcmc-gaussian}.

\subsection{MCMC for the softmax weights}

Unlike the Gaussian weight, the softmax weight is complex as a function of $x$ and $\gamma$, and no prior is known to be conditionally conjugate. This problem is also seen in any model with the logistic-normal likelihood, such as the logistic and multinomial regressions. For this class of models, \cite{polson2013bayesian} proposed the parameter augmentation using the P\'olya-gamma distribution to allow the conditionally conjugate analysis and Gibbs sampler, which is directly applicable to the mixture BPS with the softmax synthesis. 

Consider the full conditional posterior distribution of $\gamma_k$ for some $k$, conditioning all the other variables. 
For simplicity, we first consider a single observation $y$ ($n=1$).
To read off the logistic-normal likelihood, we set 
\[
    \psi_{k}
    =
    \eta_{k}(x_k;\gamma_k)
    - \log\left\{
      \sum_{\ell\ne k}
      \exp(\eta_{\ell}(x_{\ell};\gamma_{\ell}))
    \right\},
\]
where the sum in the right includes $\ell = 0$ and assumes $\eta_0=0$. Writing $d_{k}=\mathbb I(z=k)$, the density of $p(z|x,\gamma)$ can be expressed as the binary-logit form up to proportionality,
\[
    p(z|x,\gamma) \propto
    \frac{\{\exp(\psi_k)\}^{d_k}}
         {1+\exp(\psi_k)}.
\]
The P\'olya-gamma augmentation implies that, with an additional latent variable $\omega _k>0$, the full conditional of $\gamma_k$ is proportional to 
\begin{equation*} 
   \Pi (\gamma_k) \exp\left\{
      \left( d_{k}-\frac12 \right) \psi_{k}
      -\frac{\omega_{k}}{2}\psi_{k}^2
    \right\},
\end{equation*}
The exponent is quadratic in $\psi _k$ (or $\eta_k$), then the normal prior becomes conjugate. If the score is the first-order polynomial, $\psi_k$ is linear in $\gamma_k$ and $x_k$, hence the normal priors for $\gamma _k$ and $x_k$ are conjugate. For the second-order polynomial score, where $\psi_k$ is quadratic in $x_k$, we update $x_k$ by a Metropolis--Hastings method with the P\'olya--Gamma variables being marginalized out, and subsequently resample the P\'olya--Gamma variables before updating $\gamma_k$. 
The full conditional of $\omega _k$ is a P\'olya-gamma distribution, the simulation from which is known to be feasible \citep{polson2013bayesian}. For general $n$, we have $n$ latent variables and exponential factors in the expression above.

The other parameters, $\beta$ and $\tau^2$, have conjugate priors and can be sampled directly. The algorithms of the proposed MCMC method for the linear and quadratic softmax specifications are given in Section~\ref{sec:supp-mcmc-softmax}.

\section{Numerical illustrations}
\label{sec:numerical}

Using simulated datasets, we evaluate the out-of-sample performance in density estimation of the mixture BPS with the Gaussian and softmax weights, and compare them with those of the existing methods.

\paragraph{Experimental design} 
The experiment is a two-step program. First, we need to estimate the parameters in each method (e.g., $\theta$ and $(\beta , \tau^2)$ in the mixture BPS), using the test data of size $n_{\rm test}$. Note that this is the ``test'' data for the agent forecast densities, but this can also be viewed as the ``training'' data for the synthesis function, or the method for density combination. Second, we simulate additional data for evaluation of size $n_{\rm eval}$ to compute measures of predictive performance. In each experiment, we set $n_{\rm test} \in\{10,100,1000\}$ and $n_{\rm eval}=10000$.

\paragraph{Data generating processes and agent densities}

We consider two scenarios for data generating processes and agent densities. In both cases, the baseline is set to be diffuse: $h_0(y)=\mathcal{N}(y\mid 0,10^2).$

\begin{itemize}
    \item DGP-A: Center-versus-tail switching.
    
The first experiment focuses on the center and tails of the agent densities. The data generating process is 
\begin{align*}
  p_A^\ast(y)
  =
  \frac{1}{2}\mathcal{TN}_{[-2,2]}(y\mid 0,1^2)
  +
  \frac{1}{2}\mathcal{TN}_{|y|>2}(y\mid 0,3^2),
\end{align*}
while agent densities are
$h_1(x)=\mathcal{N}(x\mid0,1^2)$ and
$h_2(x)=\mathcal{N}(x\mid0,3^2)$. Except for the truncation of the support, the agent densities correctly specify the two components of the data generating process. Thus, the weight of $h_1$ in the mixture BPS should be one for $|y|\le2$, and zero otherwise. Similarly, the weight of $h_2$ should be one for $|y|>2$. Finally, note that $p_A^{\ast}$ is discontinuous at $y=\pm 2$, as visually confirmed later in Figure~\ref{fig:dgp-a}.

\item DGP-B: one-threshold switching with misspecified agents.

The second experiment is similar to the example in Section~\ref{sec:gaussian-illustration}. The data generating process is the two-part normal density defined by
\begin{align*}
  p_B^\ast(y)
  =
  \frac{1}{3}\mathcal{TN}_{(-\infty,0]}(y\mid0,1^2)
  +
  \frac{2}{3}\mathcal{TN}_{[0,\infty)}(y\mid0,2^2).
\end{align*}
Unlike the previous experiment, the agent densities are assumed to be misspecified. Specifically, we use
\[
  h_1(x)=\mathcal{N}(x\mid0.50,0.80^2),
  \quad {\rm and} \quad
  h_2(x)=\mathcal{N}(x\mid-0.50,1.60^2).
\]
Ideally, we should allocate the whole weight to agent~1 on the negative half of the real line, and to agent~2 on the positive half, while correcting the misspecification.

\end{itemize}

\paragraph{Competing methods and evaluation}

In addition to the mixture BPS, we consider other methods for density combination listed below. 
\begin{itemize}
  \item \textbf{Equal-weight pool (EW)}: an equal-weight linear pool of the baseline, $h_1$ and $h_2$. 
  \item \textbf{Linear pool (LP)}: a constant-weight linear pool whose weights are optimized using the training-sample log score \citep{hallmitchell2007combining}. 
  \item \textbf{Generalized density forecast combination (GDFC)}: the generalized density forecast combination of \citet{kapetanios2015generalised}, using piecewise-constant outcome-dependent coefficients. This approach can be viewed as a mixture BPS, where the weights are modeled nonparametrically by the histogram based on the pointwise-normalized region coefficients. 
  \item \textbf{Gaussian weight with fixed window (BPS-G0)}: The mixture BPS with Gaussian-weights, whose window parameters are fixed at $\sigma_1^2=\sigma_2^2=1$;
  \item \textbf{Gaussian weight  (BPS-G)}: The mixture BPS with Gaussian-weights, whose parameters are estimated;
  \item \textbf{Linear softmax weight (BPS-S1)}: The mixture BPS with softmax weights based on the first-order polynomial score ($\gamma_{k2}=0$ for all $k$);
  \item \textbf{Quadratic softmax weight (BPS-S2)}: The score is the second-order polynomial.
\end{itemize}

For the mixture BPS, each Markov chain is run for 100,000 iterations, with the first 25,000 discarded. The prior for $(\beta ,\tau^2)$ is informative for identification as discussed in Section~\ref{sec:syngen}, while those for $\theta$ (or $\gamma$) are sufficiently diffuse. 
For the details of the listed method, including the model specification, prior settings and estimation algorithms, see Section~\ref{supp:num}.

\paragraph{Evaluation measures}

The posterior of the outcome-dependent weight functions ($w_k(\cdot ;\theta)$ in the mixture BPS) is of our primary interest. By drawing their functional forms, we will graphically confirm whether the ideal weight allocation discussed before is achieved or not.

In terms of the out-of-sample fit, predictive performance is primarily
evaluated using the average log score and the continuous ranked
probability score (CRPS).
For an independent test sample,
\[
  \mathrm{LogS}
  =
  \frac{1}{n_{\mathrm{eval}}}
  \sum_{i=1}^{n_{\mathrm{eval}}}
  \log \widehat p(y_i^{\mathrm{eval}}),
\]
where $\widehat p$ denotes the fitted predictive density. 
The average CRPS is
\[
  \mathrm{CRPS}
  =
  \frac{1}{n_{\mathrm{eval}}}
  \sum_{i=1}^{n_{\mathrm{eval}}}
  \int_{-\infty}^{\infty}
  \left\{
    \widehat F(z)
    -
    \mathbb{I}(y_i^{\mathrm{eval}}\le z)
  \right\}^{2} dz ,
\]
where $\widehat F$ is the predictive distribution function corresponding to $\hat{p}$. 
The larger log scores and smaller CRPS values indicate the better predictive performance.

\paragraph{Results}

Figures~\ref{fig:dgp-a} and \ref{fig:dgp-b} show the point estimates/posterior distributions of the weight functions and the true/estimated predictive densities for DGP-A and DGP-B, respectively, when $n_{\rm test}=1000$. In the center-versus-tails scenario of DGP-A, only GDFC and BPS-S2 show the estimates of the weight functions as desired, using agent~1 in the center and agent~2 in the tails. For all the methods, it is difficult to accurately estimate the density kinks of the data generating process at $y=\pm 2$ even though we set $n_{\rm test}$ to be moderately large. The nonparametric approach of GDFC estimates the right kink correctly, but fails to do so in the left. The BPS-S2, in contrast, provides a wider credible intervals, showing its honest quantification of uncertainty. This posterior uncertainty leads to the smoothed density estimates around $y=\pm 2$. 

In the result of DGP-B, BPS-S1 shows the ideal switching of weights from agent~1 on the left to agent~2 on the right in its posterior plot, which is consistent with the asymptotic result illustrated in Section~\ref{sec:gaussian-illustration}, despite the misspecification of the agent densities. The change point is slightly shifted from the origin to the left, but this can be explained by the bias adjustment in the agents' locations; the posterior means of $(\beta_1,\beta_2)$ are $(-1.06,0.00)$. 
Similarly, we can confirm the switch from agent 1 to agent 2 under BPS-S2, except in the right tail that does not contribute to the overall density estimate (the estimates of the location biases under BPS-S2 are $(\beta_1,\beta_2)=(-0.92,-0.99)$). 
Under GDFC, the point estimates of the weights are roughly the same as those of BPS-S1, but the resulting density estimate is discontinuous at several points due to the nature of the nonparametric approach.

Under the mixture BPS with Gaussian weights, the posterior weights are close to constant functions in both experiments when estimating window parameter $\sigma_k$ (BPS-G). This result can be predicted from the asymptotic result and discussions given in Section~\ref{sec:gaussian-asymptotics}. With the window parameters fixed to $\sigma_k=1$ (BPS-G0), the weights become more outcome-dependent, but their functional form is constrained to be quadratic and, as a result, the density tails are explained by the baseline, not agent 2.

Table~\ref{tab:main-performance} reports the log scores and CRPSs for each $n_{\rm test}$. In most experiments, the mixture BPS with the softmax weights (BPS-S1 and BPS-S2) is the best method in terms of the out-of-sample fit. In DGP-A with small samples, LP can be the best method due to its simple structure. However, LP's simplicity, in turn, limits its flexibility in combining densities with a moderate/sufficient amount of test data, as seen in its estimates of predictive densities in Figures~\ref{fig:dgp-a} and \ref{fig:dgp-b}. 

Taken together, the experiments illustrate the importance of flexibility in the weight function: the Gaussian weights tend toward nearly constant weights when their window parameters are estimated, whereas the softmax weights can represent the regional switching.

\begin{figure}[htbp]
\centering
\includegraphics[width=0.75\textwidth]
{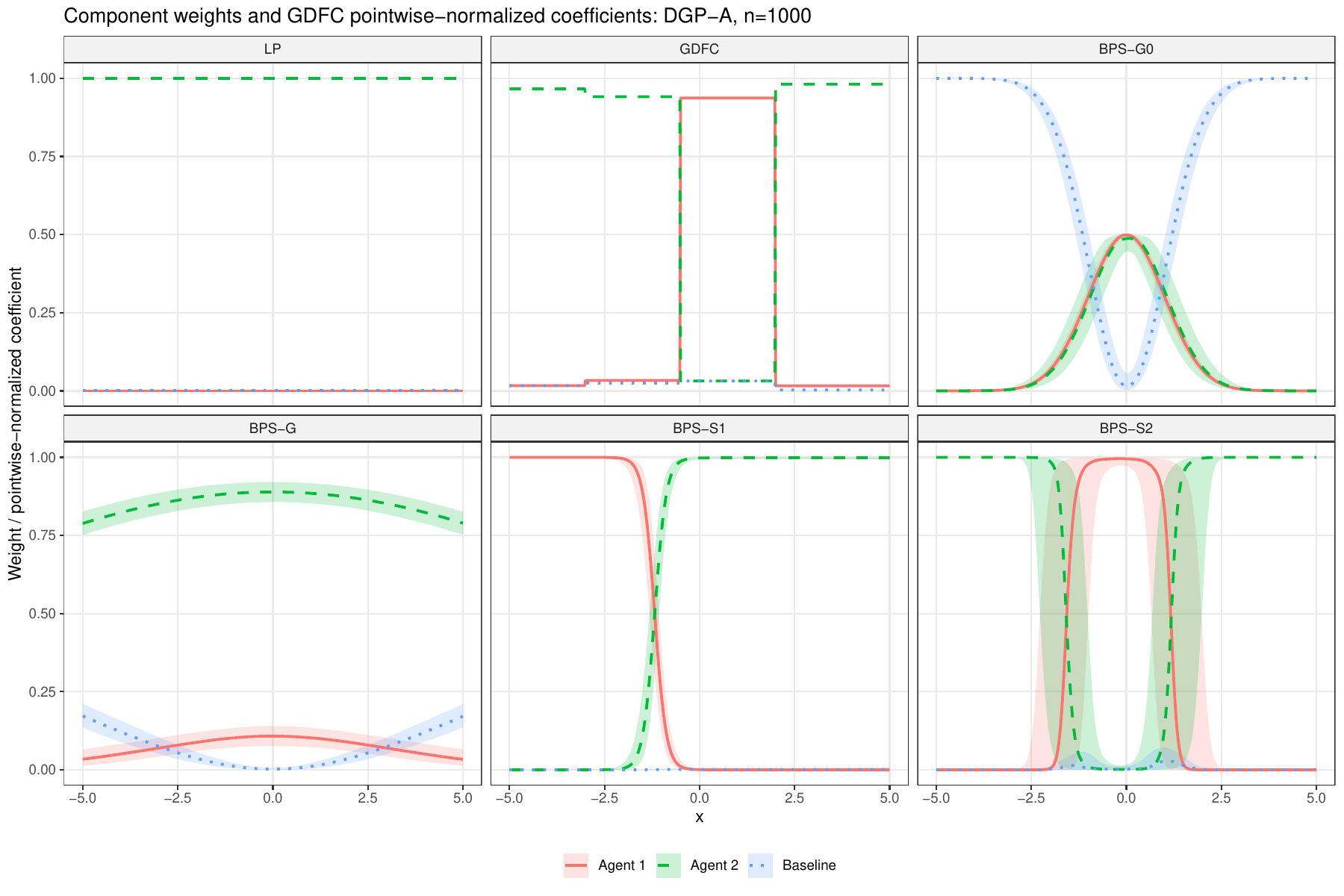}\\
\includegraphics[width=0.75\textwidth]
{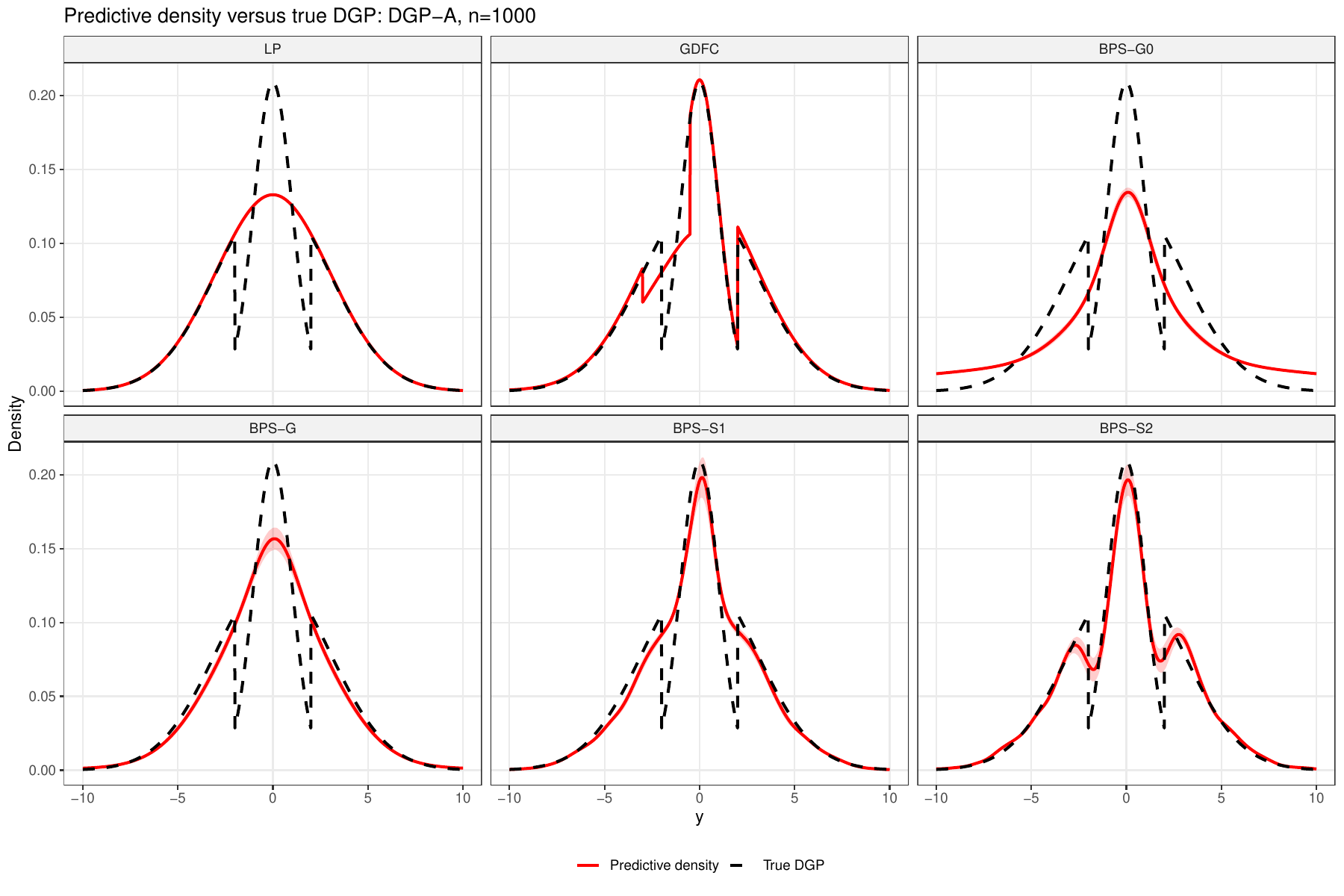}
\caption{
The estimated weights (top) and densities (bottom) for DGP-A with $n=1000$. Under the mixture BPS, the weight curves are posterior medians and the shaded regions are the 15th--85th posterior quantile bands.
For the predictive densities, the red solid curves are posterior means of the conditional predictive densities and the red shaded regions are the corresponding 15th--85th pointwise posterior credible bands.
}
\label{fig:dgp-a}
\end{figure}

\begin{figure}[htbp]
\centering
\includegraphics[width=0.75\textwidth]
{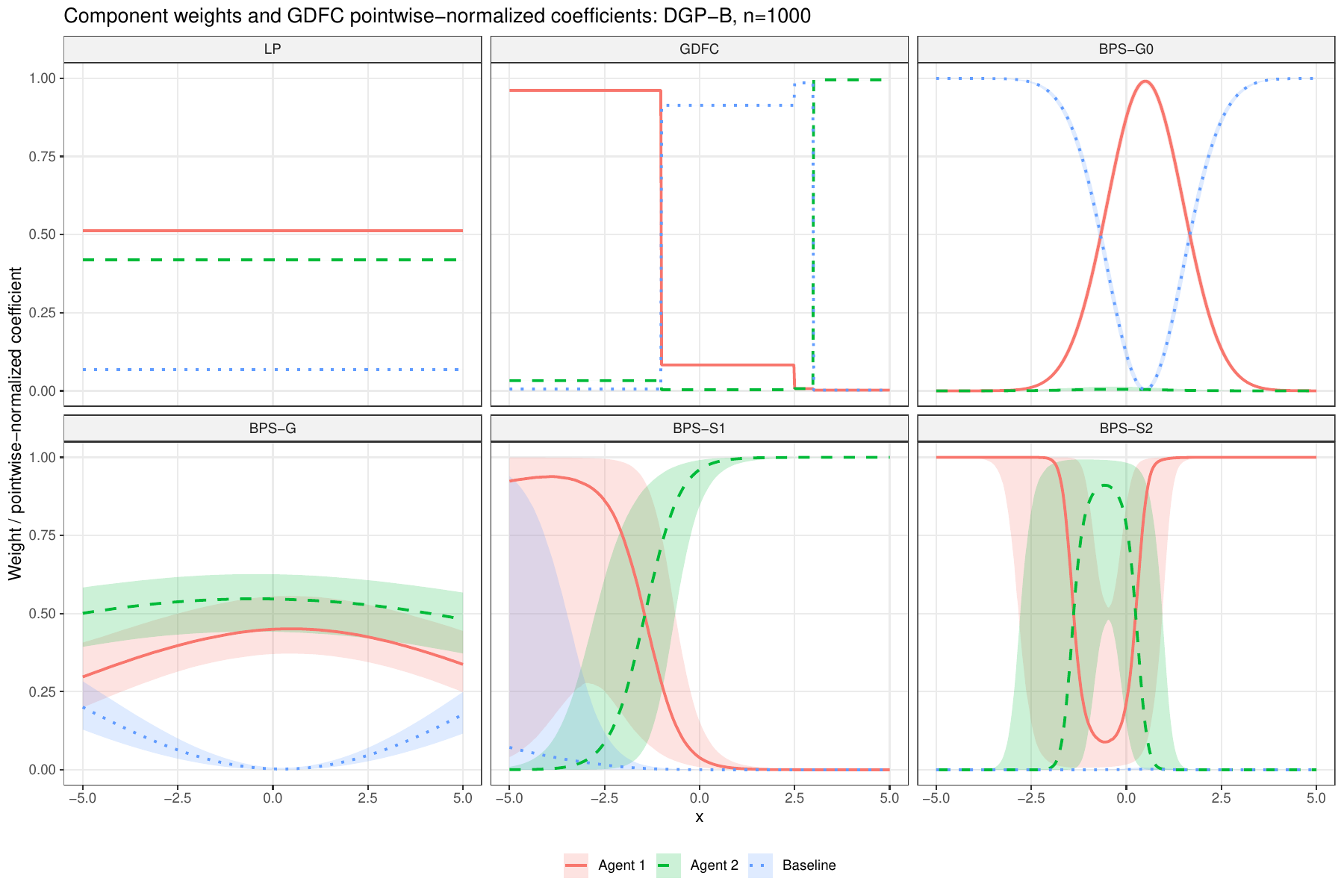} \\
\includegraphics[width=0.75\textwidth]
{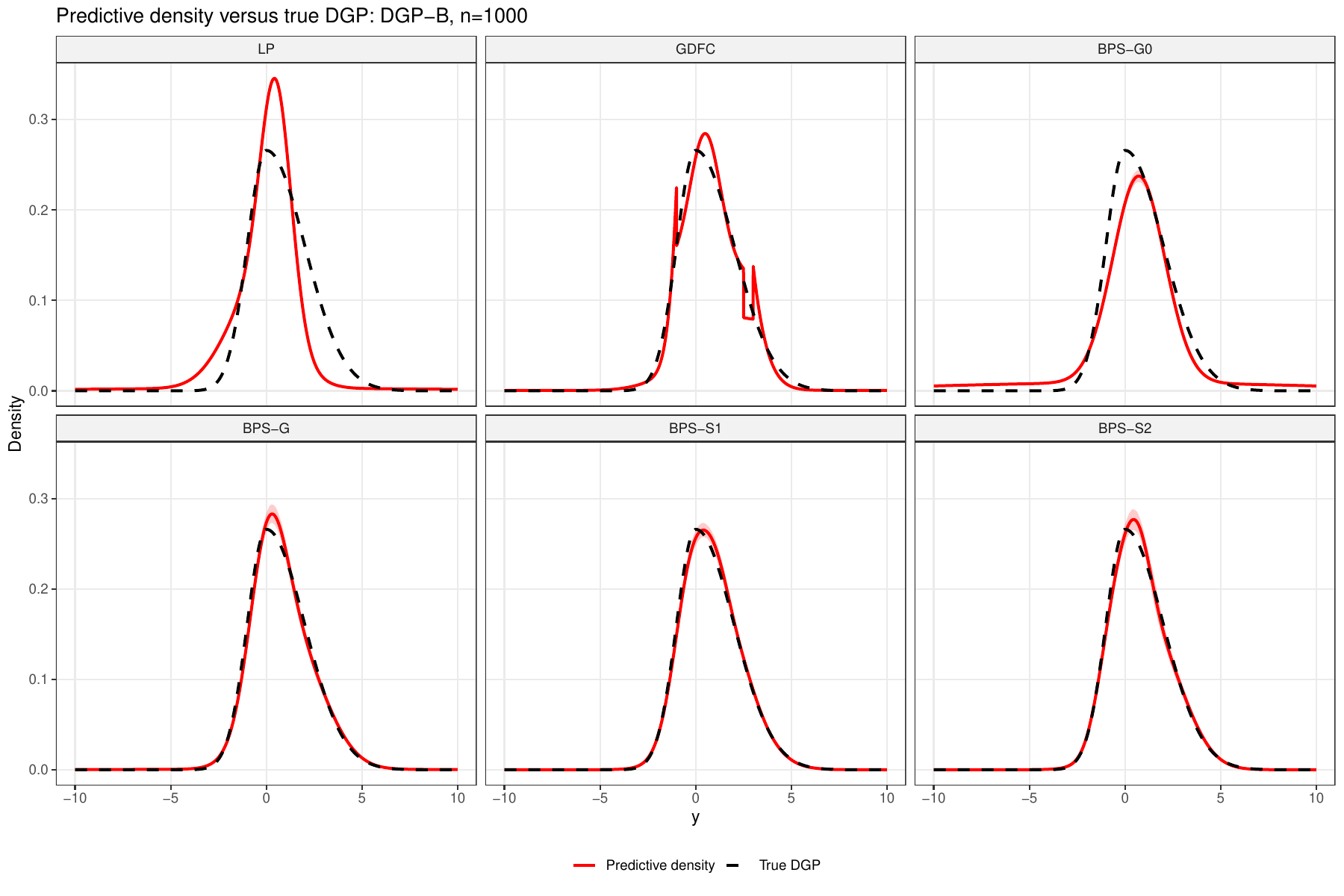}
\caption{
Estimated component weights (top) and predictive densities (bottom) for DGP-B with $n=1000$.
For the Bayesian BPS methods, the weight curves are posterior medians and the shaded regions are the 15th--85th posterior quantile bands.
For the predictive densities, the red solid curves are posterior means of the conditional predictive densities and the red shaded regions are the corresponding 15th--85th pointwise posterior credible bands.
}
\label{fig:dgp-b}
\end{figure}

\begin{table}[htbp]
\centering
\small
\setlength{\tabcolsep}{4.5pt}
\caption{
The average log scores and continuous ranked probability scores of: equal-weight pool (EW), linear pool (LP), generalized density forecast combination (GDFC), mixture BPS with Gaussian weights with the window parameters being fixed (BPS-G0) and estimated (BPS-G) and softmax weights with the first-order (BPS-S1) and second-order (BPS-S2) polynomial scores. 
The sample size in computing those measures is $n_{\mathrm{eval}}=10{,}000$ in all cases.}
\label{tab:main-performance}

\vspace{0.5em}

\begin{tabular}{lrrrrrr}
\toprule
& \multicolumn{2}{c}{$n_{\rm test}=10$}
& \multicolumn{2}{c}{$n_{\rm test}=100$}
& \multicolumn{2}{c}{$n_{\rm test}=1000$} \\
\cmidrule(lr){2-3}
\cmidrule(lr){4-5}
\cmidrule(lr){6-7}
Method
& LogS & CRPS
& LogS & CRPS
& LogS & CRPS \\
\midrule

\multicolumn{7}{l}{\textit{DGP-A}} \\
EW & -2.6509 & 1.7240 & -2.6444 & 1.7172 & -2.6447 & 1.7209 \\
LP & \textbf{-2.5000} & \textbf{1.6548}
   & -2.4859 & \textbf{1.6435}
   & -2.4988 & 1.6486 \\
GDFC & -3.0476 & 1.6651
   & -2.4975 & 1.6501
   & -2.4726 & 1.6458 \\
BPS-G0 & -2.8180 & 1.9440
   & -2.6605 & 1.7254
   & -2.7121 & 1.8507 \\
BPS-G & -2.6264 & 1.6930
   & -2.5293 & 1.6508
   & -2.5010 & 1.6467 \\
BPS-S1 & -2.5628 & 1.6705
   & -2.4945 & 1.6499
   & -2.4767 & \textbf{1.6455} \\
BPS-S2 & -2.5313 & 1.6667
   & \textbf{-2.4822} & 1.6472
   & \textbf{-2.4687} & 1.6470 \\

\midrule

\multicolumn{7}{l}{\textit{DGP-B}} \\
EW & -2.2236 & 1.1041
   & -2.2064 & 1.0889
   & -2.1912 & 1.0869 \\
LP & -2.2093 & 1.0354
   & -2.0973 & 0.9606
   & -2.0687 & 0.9382 \\
GDFC & -2.0795 & 0.9818
   & -1.9072 & 0.8759
   & -1.8460 & 0.8565 \\
BPS-G0 & -2.2421 & 1.2031
   & -2.0447 & 0.9640
   & -1.9853 & 0.9263 \\
BPS-G & -1.9947 & 0.9406
   & -1.9047 & 0.8812
   & -1.8284 & 0.8561 \\
BPS-S1 & -1.9599 & 0.9123
   & \textbf{-1.8679} & 0.8753
   & \textbf{-1.8211} & 0.8563 \\
BPS-S2 & \textbf{-1.9480} & \textbf{0.9076}
   & -1.8685 & \textbf{0.8735}
   & -1.8231 & \textbf{0.8560} \\

\bottomrule
\end{tabular}

\vspace{0.5em}

\begin{minipage}{0.92\textwidth}
\footnotesize
Note: Larger log scores and smaller CRPS values indicate better predictive
performance. Bold entries indicate the best value within each DGP and
training-sample size. Each design cell is based on one replication.
\end{minipage}
\end{table}

\section{Application: quantile forecast combination}
\label{sec:application}

As an illustration of the utility of the proposed method, we apply the proposed BPS-based method to the combination of subjective quantile forecasts \citep{West1992d}. 
We use the revenue data at two movie theatres (the Brattle Theatre and Harvard Square Theatre) reported by \citet{pratt1985repetitive}. Their dataset contains five point forecasts of revenues provided by a single expert for each screening program. 
Since only the point forecasts are reported without any quantification of their uncertainty, the agent forecast distributions in this example are degenerate. 
As discussed in Section~\ref{sec:syngen}, the proposed method of density combination is applicable to this case of degenerate agent ``densities,'' integrating our own uncertainty on such point forecasts via calibration parameters. 

\subsection{Data and forecasting framework}

For each program, the expert reported their subjective estimates of the 25\%, 50\% and 75\% quartiles of gross revenue, together with two extreme values described as ``astonishingly low'' and ``astonishingly high.'' 
Following the treatment in \cite{pratt1985repetitive}, we view the latter two values as the 1\% and 99\% quantiles, respectively. The five agent forecasts are then indexed by $\mathcal R=\{0.01,0.25,0.50,0.75,0.99\}$. 
The reported 100$r$\% quantile of program $i$ is denoted by $\hat{y}_{ir}$, while the realized revenue is $y_i$; both are in log scale and per day. 

The two theatres are analyzed separately. Following the original study, we use the post-summer observations, beginning with Program~25, for estimation and evaluation. 
In program order, the test (training) and evaluation samples contain 52 and 26 observations for Brattle, and 42 and 21 for Harvard Square, respectively. More details on the dataset are given in Section~\ref{supp:app-data}.

\subsection{Agent forecasts and synthesis model}

In this example, the agent forecast distribution is defined as $h_{ir}(x_{ir})=\delta_{\hat{y}_{ir}}(x_{ir})$. The bias-adjusted synthesis for these degenerate forecasts is given in (\ref{eq:deg_ba}) and is applicable to the current example as-is, by replacing the agent index $k\in \{ 1,\dots, K \}$ with $r \in \mathcal{R}$. Note that, in this example, we always set $x_{it} = \hat{y}_{ir}$, hence the latent variables are not sampled but fixed in the posterior computation. The set of parameters to be sampled is $(\theta,\beta,\tau^2)$. In the softmax specification we replace $\theta$ with $\gamma$.

We compare Gaussian weights (BPS-G), linear softmax weights (BPS-S1), and quadratic softmax weights (BPS-S2). 
The baseline $h_0$ is a normal density fitted to each theatre's pre-summer outcomes, Programs~1--24, that are not used in the test and evaluation samples. 
The prior for the weight parameters is constructed carefully, reflecting our prior belief that the $r$-th agent forecast $\hat{y}_{ir}$ should be used more in the neighborhood of the corresponding 100$r$\% quantile of $y_i$. 
For more details, including the hyperparameters of the prior distributions, see Section~\ref{supp:app-priors}.

As a benchmark, we consider the prediction by the least-squares quantile matching (QM), where we construct a normal distribution as the predictive distribution for each program by matching the five quantiles of the normal distribution to those of the expert
\citep{adrian2019vulnerable}. 
Details of the implementation of the QM method are found in Section~\ref{supp:app-evaluation}. 
The predictive performances of the QM and BPS are evaluated by the LogS and CRPS.

\subsection{Estimated weights and predictive densities}
\label{subsec:application-results}

Figure~\ref{fig:app-weights} displays the estimated weight as a function of the outcome. The estimates of the weight functions obtained under the BPS-S2 model is interpretable; for each $r$, the weight for $\hat{y}_{ir}$ is high around the 100$r$\% quantile. The BPS-G provides similar estimates of the weight functions, but most of the total weight is dominated by the baseline density, due to the limitation of the upper-bounded weight functions, as discussed before. In addition, the weights of the three quantiles are all similar, especially in the application to Brattle Theatre. The BPS-S1 provides constant weight estimates and is less outcome-dependent due to the lack of flexibility of the functional form of the score.

\begin{figure}[htbp]
\centering
\includegraphics[width=0.96\linewidth]
{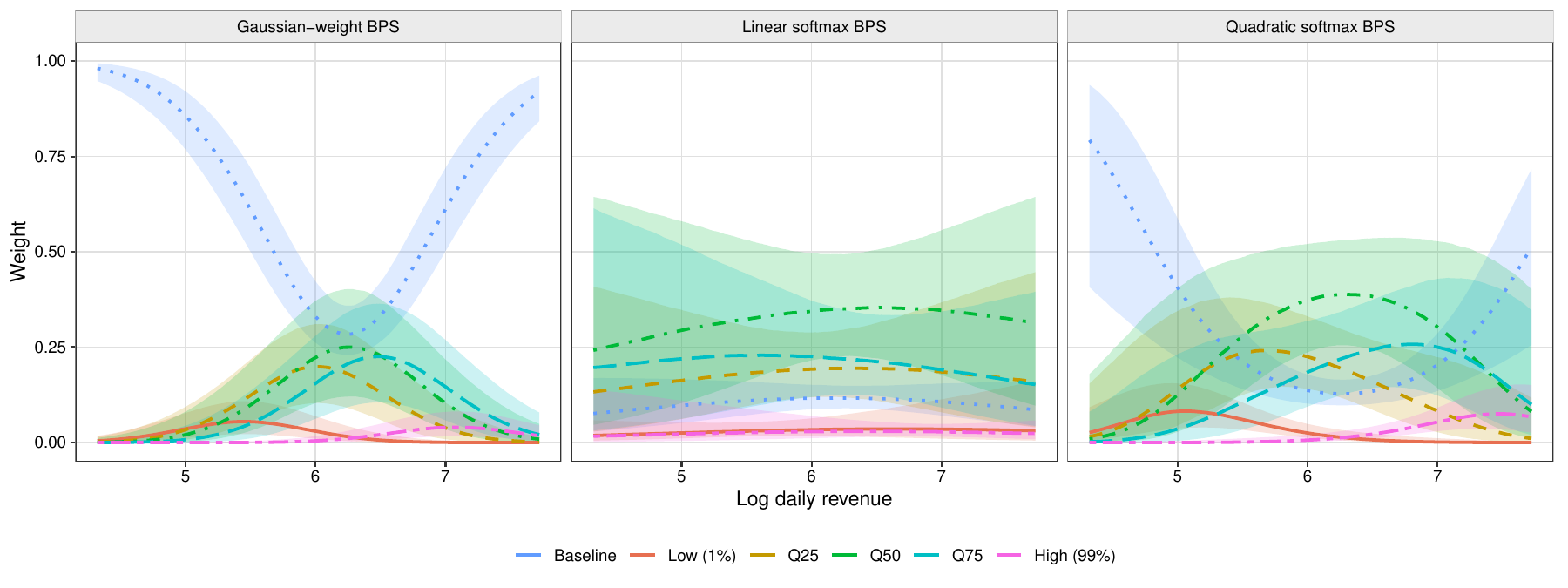}\\[0.6em]
\includegraphics[width=0.96\linewidth]
{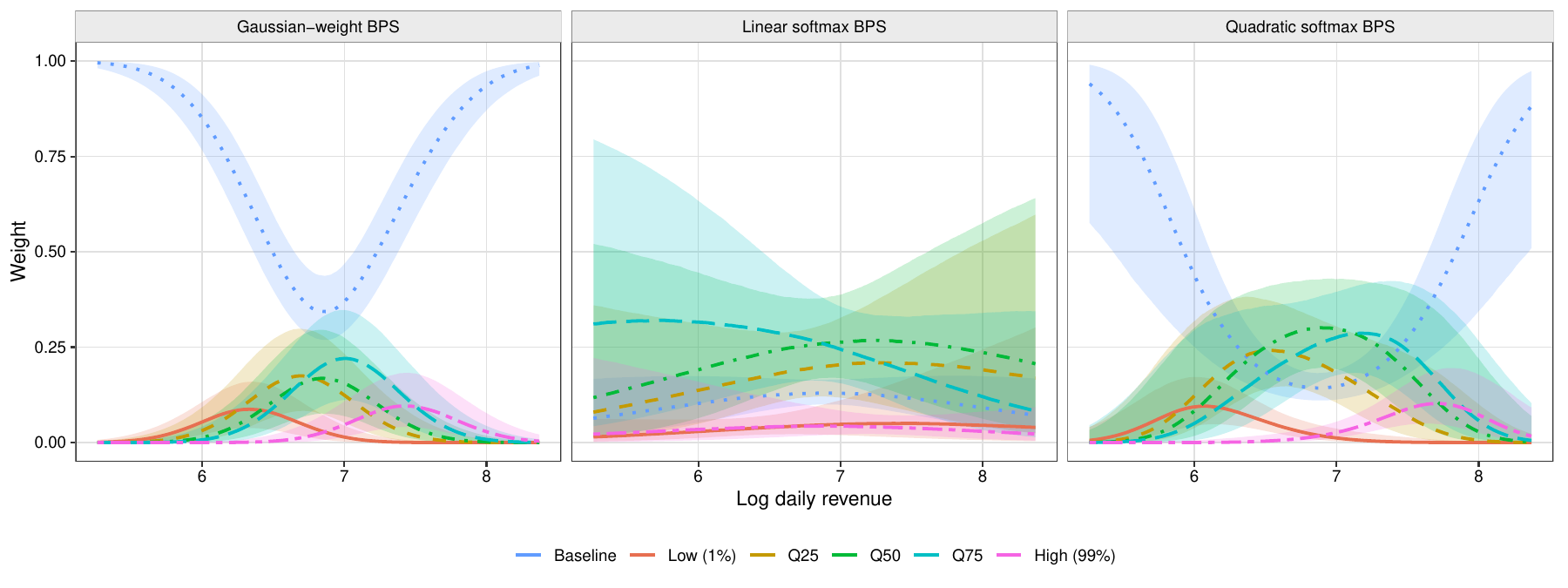}
\caption{Estimated weight functions for Brattle Theatre (top) and
Harvard Square Theatre (bottom), evaluated along
$x_{0.01}=x_{0.25}=x_{0.50}=x_{0.75}=x_{0.99}=g$.
The horizontal axis is $g$ on the log daily revenue scale.
Curves are posterior medians and shaded regions are the
15th--85th pointwise posterior quantile bands.}
\label{fig:app-weights}
\end{figure}

Using the estimated weight functions, each method defines a predictive density customized for program $i$. As an example of such predictive densities, we show the predictive densities of the four methods for Brattle Program~77, the first observation in its evaluation sample, in Figure~\ref{fig:app-density-brattle}. 
The densities obtained under the three BPS-based methods are high around the three quantiles, while leaving some probability mass in the tails. The densities are also skewed to the left, shifting their modes toward the actual observation and improving its predictive accuracy. In addition, the uncertainty on the functional values of the predictive densities is quantified by using the posterior samples; the 70\% credible intervals are shown as the shaded areas in the figure. These properties cannot be seen in the result of the QM. 
BPS-S1 gives the highest log density, $0.1355$, while BPS-S2 gives the smallest CRPS, $0.1091$, for this particular program. 
We have confirmed similar results for other programs; for the example of Harvard Square Program~68, see Section~\ref{supp:app-additional}.

\begin{figure}[htbp]
\centering
\includegraphics[width=0.88\linewidth]
{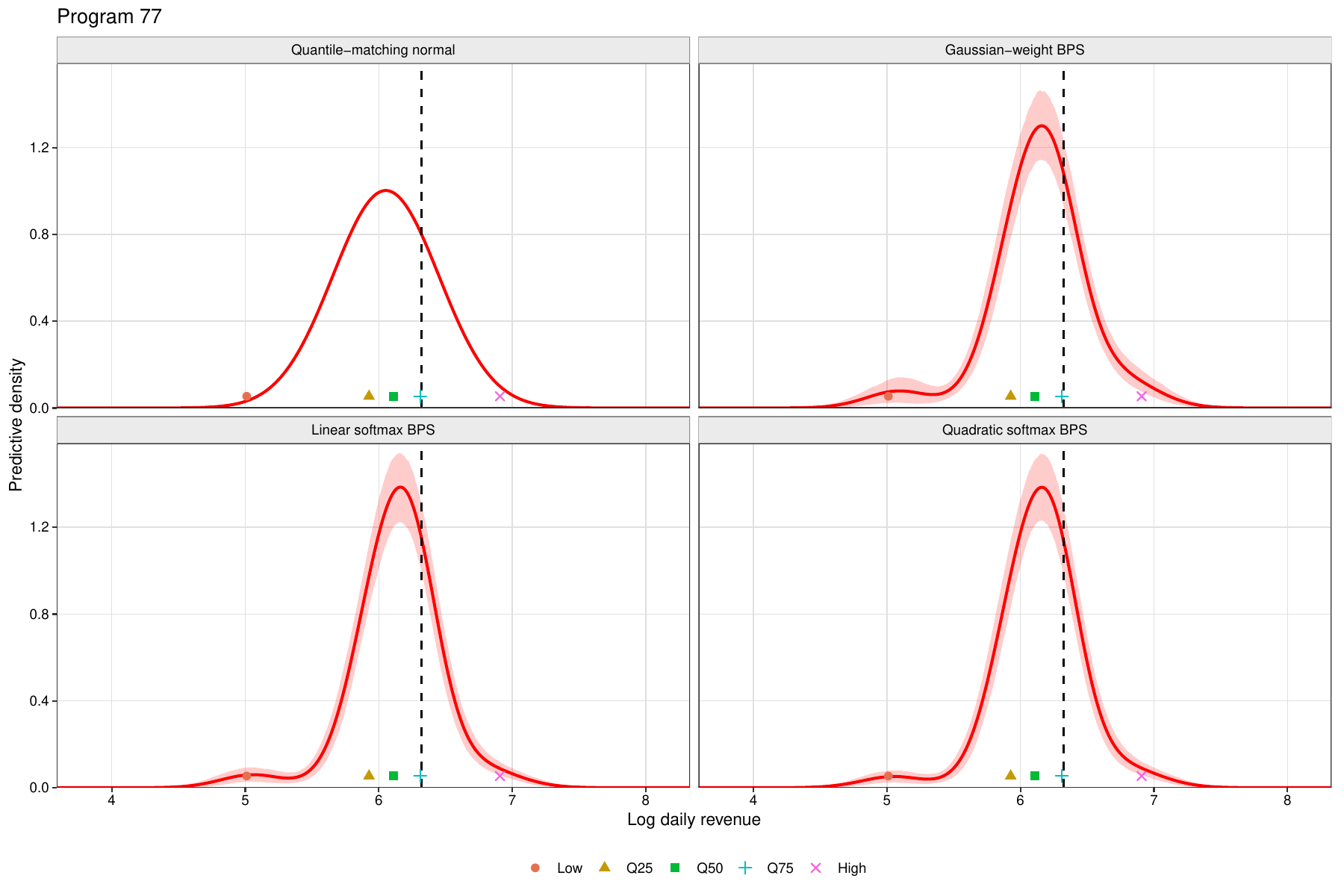}
\caption{Predictive densities for Brattle Theatre, Program~77.
For BPS, red curves are posterior means of the conditional predictive
densities and shaded regions are the 15th--85th posterior
quantile bands. 
The five markers indicate the reported quantile forecasts $\hat{y}_{ir}$, and the vertical dashed
line marks the realized log daily revenue $y_i$.}
\label{fig:app-density-brattle}
\end{figure}

Lastly, we averaged the values of LogS and CRPS across all programs to evaluate the overall predictive performance of the four models, as reported in Table~\ref{tab:app-performance}. 
In both theatres, and in both predictive measures, the best model is the BPS with softmax weights.

\begin{table}[htbp]
\centering
\caption{Average predictive scores for the theatre revenue data.
Larger LogS and smaller CRPS indicate better performance.
Bold entries identify the best value for each theatre and score.}
\label{tab:app-performance}
\begin{tabular}{lrrrr}
\toprule
& \multicolumn{2}{c}{Brattle Theatre}
& \multicolumn{2}{c}{Harvard Square Theatre}\\
\cmidrule(lr){2-3}\cmidrule(lr){4-5}
Method & LogS & CRPS & LogS & CRPS\\
\midrule
QM     & -0.0594 & 0.1410 & -0.3250 & 0.1669\\
BPS-G  & -0.0878 & 0.1403 & -0.2626 & 0.1708\\
BPS-S1 & \textbf{-0.0423} & 0.1378 & -0.2157 & 0.1662\\
BPS-S2 & -0.0467 & \textbf{0.1376}
       & \textbf{-0.2128} & \textbf{0.1654}\\
\bottomrule
\end{tabular}
\end{table}

\section{Concluding remarks and future research}

The softmax weights with the second-order polynomial scores performed well both in the simulation and real data analyses in this study. The success of our analysis critically depends on the choice of priors for $\gamma$, the coefficients of the polynomial. While we subjectively, hence carefully, elicited the hyperparameters based on the context of the data analysis in our application, trying several choices of the hyperparameters and monitoring their performance in LogS and CRPS are also practiced routinely. For higher-order polynomials, we expect that the use of shrinkage prior for $\gamma$ would be useful for an efficient inference. 

Our analysis is limited to the univariate density function with the independently and identically distributed observations. Extension to the multivariate case is a challenging problem in general. The softmax weight can be used even in multivariate settings, but its flexibility should be further investigated. For the proposed model to be applied to a wider range of econometric problems, spatial and/or temporal versions should be developed in the future research.

\section*{Acknowledgment}

The authors thank Katsumi Shimotsu and Mike West for comments and suggestions on the early version of this manuscript. Research of the second author is partially supported by JSPS KAKENHI grant numbers 22K13374. 


\bibliographystyle{chicago}
\bibliography{bps}

\newpage
\setcounter{equation}{0}
\renewcommand{\theequation}{S\arabic{equation}}
\setcounter{section}{0}
\renewcommand{\thesection}{S\arabic{section}}
\setcounter{table}{0}
\renewcommand{\thetable}{S\arabic{table}}
\setcounter{figure}{0}
\renewcommand{\thefigure}{S\arabic{figure}}

\setcounter{page}{1}

\begin{center}
{\LARGE {\bf 
Supplementary Materials of: \\``Outcome-dependent density combination by mixture Bayesian predictive synthesis''
}}
\end{center}

\bigskip
This document collects the proofs, algorithms, details of posterior computation, additional numerical results, and discussions that are omitted in the main text. Contents covered in this documents include: derivation of the marginal predictive density (Section~\ref{supp:marginal}), the proof of Theorem~1 and discussions (Section~\ref{supp:consistency}), details of the numerical computation of the pseudo truth in the two-part normal example (Section~\ref{supp:klproj}), derivation of the MCMC algorithm (Section~\ref{supp:mcmc}), and the full details of the simulation and real data analyses (Sections~\ref{supp:num} and \ref{supp:application}).

\section{Marginal densities under the Bias-adjusted synthesis with Gaussian weights}
\label{supp:marginal}

Below, we derive the marginal density $p(y\mid\mathcal{H},\theta)$ under the Bias-adjusted synthesis with the Gaussian weights defined in Section~2. 

Equation~(\ref{eq:diracpred}) in the main text is obtained by marginalizing the latent states $x$ out. We rewrite this expression as 
\begin{equation}
    p(y\mid\mathcal{H},\theta)
    = c_0(\theta) h_0(y) + \sum_{k=1}^K c_k(\theta) h_k^\ast(y;\theta),
    \label{eq:gauss-marginal-generic}
\end{equation}
where
\begin{equation*}
    c_k(\theta) = \int w_k(u;\theta) h_k(u)\,du,
    \quad
    h_k^\ast(y;\theta) = \frac{w_k(y;\theta) h_k(y)}{c_k(\theta)},
    \quad \mathrm{and} \quad
    c_0(\theta) = 1 - \sum_{k=1}^K c_k(\theta),
\end{equation*}
assuming that $c_k(\theta )>0$ for almost all $\theta$ without loss of generality (If not, drop the $k$-th agent). In the main text, we call this expression an outcome-independent combination since $c_k(\theta)$ is not dependent on $y$. Instead, this is not the combination of the original agent densities $h_k$, but of the tilted densities $h_k^{\ast}$. To compute the integral in $c_k(\theta)$, we need to specify weight $w_k(u;\theta)$ and agent density $h_k(u)$. Here, we assume the bias-adjusted synthesis with the Gaussian weight, and that each agent density is Gaussian.

\begin{prp}
\label{prp:gauss-marginal}
Suppose $h_k(x)=\mathcal{N}(x\mid \mu_{h_k},\sigma_{h_k}^2)$ for $k=1,\ldots,K$, and consider the synthesis function
\begin{equation*}
    \alpha(y\mid x,\theta,\beta,\tau)
    = w_0(x;\theta) h_0(y)
    + \sum_{k=1}^K w_k(x_k;\theta)\,
      \mathcal{N}(y\mid x_k-\beta_k,\tau_k^2),
\end{equation*}
with Gaussian weights (\ref{eq:gauss-weight}). Then
\begin{equation}
    p(y\mid\mathcal{H},\theta,\beta,\tau)
    = \left\{ 1 - \sum_{k=1}^K q_k A_k \right\} h_0(y)
    + \sum_{k=1}^K q_k A_k\,
      \mathcal{N}\!\left( y \;\middle|\; a_k-\beta_k,\ \tau_k^2+r_k \right),
    \label{eq:gauss-marginal-closed}
\end{equation}
where
\begin{equation*}
    A_k
    = \sqrt{\frac{\sigma_k^2}{\sigma_k^2+\sigma_{h_k}^2}}
      \exp\left\{-\frac{(\mu_k-\mu_{h_k})^2}{2(\sigma_k^2+\sigma_{h_k}^2)}\right\},
    \qquad
    r_k = \left( \sigma_{h_k}^{-2}+\sigma_k^{-2} \right)^{-1},
\end{equation*}
\begin{equation*}
    a_k
    = r_k\left( \frac{\mu_{h_k}}{\sigma_{h_k}^2}
        + \frac{\mu_k}{\sigma_k^2} \right).
\end{equation*}
\end{prp}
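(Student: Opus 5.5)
We compute $p(y\mid\mathcal{H},\theta,\beta,\tau)=\int \alpha(y\mid x,\theta,\beta,\tau)\,h(x)\,dx$ term by term, using that $h(x)=\prod_k h_k(x_k)$ and that each Gaussian weight $w_k(x_k;\theta)$ depends only on its own coordinate.

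\emph{Baseline term.} By Fubini, it equals $h_0(y)\int w_0(x;\theta)h(x)\,dx$. Writing $w_0=1-\sum_k w_k$, we get $h_0(y)\{1-\sum_k c_k(\theta)\}$ with $c_k(\theta)=\int w_k(u;\theta)h_k(u)\,du$. It therefore suffices to show $c_k(\theta)=q_kA_k$.

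\emph{Agent term.} For agent $k$, the factors $h_{k'}$ with $k'\neq k$ integrate to one. What remains is the one-dimensional integral
\[
\int q_k \exp\left\{-\frac{(x_k-\mu_k)^2}{2\sigma_k^2}\right\}\mathcal{N}(x_k\mid\mu_{h_k},\sigma_{h_k}^2)\,\mathcal{N}(y\mid x_k-\beta_k,\tau_k^2)\,dx_k .
\]
We rewrite the Gaussian kernel as $\sqrt{2\pi\sigma_k^2}\,\mathcal{N}(x_k\mid\mu_k,\sigma_k^2)$ and apply the standard product-of-Gaussians identity
\[
\mathcal{N}(x\mid\mu_k,\sigma_k^2)\,\mathcal{N}(x\mid\mu_{h_k},\sigma_{h_k}^2)=\mathcal{N}(\mu_k\mid\mu_{h_k},\sigma_k^2+\sigma_{h_k}^2)\,\mathcal{N}(x\mid a_k,r_k),
\]
which follows from completing the square in $x$. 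This produces $r_k=(\sigma_{h_k}^{-2}+\sigma_k^{-2})^{-1}$ and $a_k=r_k(\mu_{h_k}/\sigma_{h_k}^2+\mu_k/\sigma_k^2)$.

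The constant is then
\[
q_k\sqrt{2\pi\sigma_k^2}\,\bigl(2\pi(\sigma_k^2+\sigma_{h_k}^2)\bigr)^{-1/2}\exp\left\{-\frac{(\mu_k-\mu_{h_k})^2}{2(\sigma_k^2+\sigma_{h_k}^2)}\right\}=q_kA_k .
\]
This simultaneously gives $c_k(\theta)=q_kA_k$, since integrating over $y$ first removes the bias-adjustment factor.

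\emph{Convolution step.} With the constant extracted, the remaining integral is $\int \mathcal{N}(x\mid a_k,r_k)\,\mathcal{N}(y\mid x-\beta_k,\tau_k^2)\,dx$. This is the density of $y=x-\beta_k+\varepsilon$ with $x\sim\mathcal{N}(a_k,r_k)$ and $\varepsilon\sim\mathcal{N}(0,\tau_k^2)$ independent, so it equals $\mathcal{N}(y\mid a_k-\beta_k,\tau_k^2+r_k)$. Combining the baseline and agent terms yields (\ref{eq:gauss-marginal-closed}).

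The only real obstacle is bookkeeping the normalizing constants in the completing-the-square step, so that the factor $\sqrt{\sigma_k^2/(\sigma_k^2+\sigma_{h_k}^2)}$ comes out exactly. As a check, setting $\mu_k=\mu_{h_k}$, $\beta_k=0$ and $\tau_k^2\to0$ should recover (\ref{eq:c1-example}).
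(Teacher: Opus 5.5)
Your proposal is correct and follows essentially the same route as the paper, which says only that the result is a direct Gaussian integration. Your three steps are exactly that computation: splitting $w_0=1-\sum_k w_k$, completing the square in $x_k$ to extract $c_k(\theta)=q_kA_k$ with tilted density $\mathcal{N}(x_k\mid a_k,r_k)$, and then convolving with the calibration noise; your constant bookkeeping and your check against (\ref{eq:c1-example}) are both right.
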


The proof is a direct Gaussian integration. In the notation of (\ref{eq:gauss-marginal-generic}), the total mixture proportion of agent $k$ is $c_k=q_kA_k$. The corresponding tilted component is the normal density in (\ref{eq:gauss-marginal-closed}). As $\sigma_k^2\to\infty$ for all $k$, we have $A_k\to1$, $r_k\to\sigma_{h_k}^2$, and $a_k\to\mu_{h_k}$. In this case, the baseline component vanishes and the predictive density converges to the constant-weight calibrated linear pool with weights $q_{1:K}$. If $\sigma_k^2\to0$, then $A_k\to0$, so agent $k$ disappears from the predictive density and its share is reallocated to the baseline. These limiting behaviors are commented on in Section~\ref{sec:gaussian-asymptotics}.

The marginal density under the Dirac synthesis can be derived similarly. In fact, the expression above covers such cases by setting $\beta_k=0$ and $\tau_k=0$. These results are needed in evaluating the asymptotic predictive density in Section~S3.

\section{Posterior concentration for BPS with parameterized synthesis weights}
\label{supp:consistency}

This section proves the posterior concentration result used in Proposition~\ref{prp:bps-weight-concentration} of the main text. The argument is stated for a generic finite-dimensional parameterization of the synthesized predictive density. It therefore applies to the Gaussian and softmax weights, as well as other BPS synthesis functions, whenever the induced predictive density is proper and satisfies the regularity conditions below. The proof is a standard likelihood-ratio argument for posterior concentration under possible misspecification, specialized to the likelihood induced by the synthesized BPS predictive density \citep[Definition~6.1, Proposition~6.2, Theorems~6.16--6.17]{ghosalvandervaart2017fundamentals}.

\subsection{Setup and regularity conditions}

Let \(\mathcal{H}=\{h_1,\ldots,h_K\}\) be the fixed collection of agent forecast densities, let \(x=(x_1,\ldots,x_K)^\top\) denote latent agent states, and write $h(x)=\prod_{k=1}^K h_k(x_k)$.
Let $(\Theta,d)$ be a metric parameter space, equipped with its Borel $\sigma$-field $\mathcal{B}(\Theta)$, and let $\Pi$ be a prior probability measure on $\mathcal{B}(\Theta)$.

A parameter \(\theta\in\Theta\) enters through a synthesis function \(\alpha(y\mid x,\theta)\). The induced synthesized predictive density is
\begin{equation*}
    m_\theta(y)
    = p(y\mid\mathcal{H},\theta)
    = \int \alpha(y\mid x,\theta)h(x)\,dx,
\end{equation*}
where, for notational clarity, we suppressed the dependence on \(\mathcal{H}\).

Let \(Y_1,\ldots,Y_n\) be independently generated from some distribution \(P^\ast\). 
For $B\in\mathcal{B}(\Theta)$, define the synthesis likelihood and posterior by
\begin{equation*}
    L_n(\theta)=\prod_{i=1}^n m_\theta(Y_i),
    \qquad {\rm and} \qquad 
    \Pi_n(B\mid Y_{1:n},\mathcal{H})
    =
    \frac{\int_B L_n(\theta)\,\Pi(d\theta)}
         {\int_\Theta L_n(\theta)\,\Pi(d\theta)} .
\end{equation*}
Denote the expected log score by 
\begin{equation*}
    M(\theta)=\mathbb{E}_{P^\star}\!\left[\log m_\theta(Y)\right].
\end{equation*}
Suppose that \(P^\ast\) has density \(p^\ast\) with respect to the same dominating measure. Then, the expected log score is written as 
\begin{equation*}
    M(\theta)
    =
    \mathbb{E}_{P^\ast}\log p^\ast(Y)
    -
    \mathrm{KL}(p^\ast,m_\theta),
\end{equation*}
which is well-defined whenever the relevant expectations are finite. 
Hence maximizing the expected log score is equivalent to minimizing the Kullback--Leibler divergence from $p^\star$ to $m_\theta$. Next, define
\begin{equation*}
    M^\star:=\sup_{\theta\in\Theta}M(\theta),
    \qquad {\rm and} \qquad 
    \Theta^\star:=\{\theta\in\Theta:M(\theta)=M^\star\}.
\end{equation*}
For every $\theta^\star\in\Theta^\star$, the predictive density $m_{\theta^\star}$ is a KL projection of $p^\star$ onto the family $\{m_\theta:\theta\in\Theta\}$.
Under misspecification, \(\Theta^\star\) is the pseudo-true set \citep[Theorem~8.36]{ghosalvandervaart2017fundamentals}. Finally, for $\varepsilon>0$ and $\eta>0$, define
\[
    A_\varepsilon
    :=
    \{\theta\in\Theta:
      d(\theta,\Theta^\star)\geq\varepsilon\},
    \qquad {\rm and} \qquad 
    B_\eta
    :=
    \{\theta\in\Theta:
      M(\theta)\geq M^\star-\eta\}.
\]
\begin{asm}[Uniform log-score approximation and separation]
\label{asm:general-bps-consistency}
\hspace{0pt}
\begin{enumerate}
\item[(i)] For every \(\theta\in\Theta\), \(m_\theta\) is a proper density, \(m_\theta(Y)>0\) \(P^\ast\)-almost surely, and \(M(\theta)\) is finite.
\item[(ii)] The log score satisfies the uniform law of large numbers
\begin{equation*}
    \sup_{\theta\in\Theta}
    \left|
    \frac{1}{n}\sum_{i=1}^n\log m_\theta(Y_i)-M(\theta)
    \right|
    \xrightarrow{\ P^\ast\ }0 .
\end{equation*}
where $\xrightarrow{P^\star}$ denotes convergence in probability under $P^\star$.
\item[(iii)] \(M^\star\) is finite and \(\Theta^\star\) is nonempty.
\item[(iv)] For every \(\varepsilon>0\), there exists \(\delta_\varepsilon>0\) such that
\begin{equation*}
    \sup_{\theta\in A_\varepsilon}M(\theta)
    \leq
    M^\star-\delta_\varepsilon.
\end{equation*}
\item[(v)] For every \(\eta>0\),
\begin{equation*}
    \Pi(B_\eta)>0.
\end{equation*}
\end{enumerate}
\end{asm}

\begin{thm}[Posterior concentration for parameterized BPS weights]
\label{thm:general-bps-consistency}
Suppose Assumption~\ref{asm:general-bps-consistency} holds. Then, for every \(\varepsilon>0\),
\begin{equation*}
    \Pi_n\{\theta:d(\theta,\Theta^\star)\ge\varepsilon\mid Y_{1:n},\mathcal{H}\}
    \xrightarrow{\ P^\ast\ }0 .
\end{equation*}
If \(\Theta^\star=\{\theta^\star\}\), then
\begin{equation*}
    \Pi_n\{\theta:d(\theta,\theta^\star)\ge\varepsilon\mid Y_{1:n},\mathcal{H}\}
    \xrightarrow{\ P^\ast\ }0,
    \qquad \varepsilon>0 .
\end{equation*}
\end{thm}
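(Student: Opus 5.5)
The plan is the standard ratio argument: bound the numerator of the posterior on $A_\varepsilon$ from above and the denominator from below, both relative to $\exp(nM^\star)$. Write the normalized log-likelihood $\ell_n(\theta)=n^{-1}\sum_{i=1}^n\log m_\theta(Y_i)$. By Assumption~\ref{asm:general-bps-consistency}(i) this is well defined and finite $P^\ast$-a.s. Let
\[
R_n=\sup_{\theta\in\Theta}|\ell_n(\theta)-M(\theta)|;
\]
by (ii), $R_n\to0$ in $P^\ast$-probability. Then
\[
\Pi_n(A_\varepsilon\mid Y_{1:n},\mathcal{H})=\frac{\int_{A_\varepsilon}e^{n\ell_n(\theta)}\Pi(d\theta)}{\int_\Theta e^{n\ell_n(\theta)}\Pi(d\theta)} .
\]

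\textbf{Numerator.} On $A_\varepsilon$, (iv) gives $M(\theta)\le M^\star-\delta_\varepsilon$, hence $\ell_n(\theta)\le M^\star-\delta_\varepsilon+R_n$. Since $\Pi(A_\varepsilon)\le1$,
\[
\int_{A_\varepsilon}e^{n\ell_n}\,d\Pi\le e^{n(M^\star-\delta_\varepsilon+R_n)} .
\]
Finiteness of $M^\star$ from (iii) is what makes this expression meaningful.

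\textbf{Denominator.} Fix $\eta=\delta_\varepsilon/4$. On $B_\eta$ we have $\ell_n(\theta)\ge M^\star-\eta-R_n$. Restricting the integral to $B_\eta$ gives
\[
\int_\Theta e^{n\ell_n}\,d\Pi\ge \Pi(B_\eta)\,e^{n(M^\star-\eta-R_n)} ,
\]
and $\Pi(B_\eta)>0$ by (v).

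\textbf{Combination.} Dividing the two bounds,
\[
\Pi_n(A_\varepsilon\mid Y_{1:n},\mathcal{H})\le \Pi(B_\eta)^{-1}\exp\{-n(\delta_\varepsilon-\eta-2R_n)\} .
\]
On the event $\{R_n\le\delta_\varepsilon/4\}$, whose probability tends to one by (ii), the right side is at most $\Pi(B_\eta)^{-1}e^{-n\delta_\varepsilon/4}\to0$. Therefore, for any $\kappa>0$,
\[
P^\ast\{\Pi_n(A_\varepsilon\mid\cdot)>\kappa\}\le P^\ast\{R_n>\delta_\varepsilon/4\}+\mathbb{I}\{\Pi(B_\eta)^{-1}e^{-n\delta_\varepsilon/4}>\kappa\}\to0 ,
\]
which is the first claim. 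The case $\Theta^\star=\{\theta^\star\}$ follows immediately because $d(\theta,\Theta^\star)=d(\theta,\theta^\star)$.

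\textbf{Main obstacle.} The difficulty is in the technicalities, not the inequalities. First, $R_n$ need not be measurable, since it is a supremum over an uncountable set. I would handle this with outer probability, or simply read (ii) as asserting that the event $\{R_n\le c\}$ contains a measurable set of probability tending to one. Second, the integrals must be measurable in $\theta$, which requires joint measurability of $(\theta,y)\mapsto m_\theta(y)$; I would state this as an implicit standing convention. Third, the denominator must be positive and finite almost surely. The lower bound above already gives positivity on the good event. Finiteness holds on that same event because $\ell_n\le M^\star+R_n$ uniformly in $\theta$.
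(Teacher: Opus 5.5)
Your proposal is correct and follows essentially the same route as the paper. On the high-probability event where the uniform log-score deviation is small, you bound the posterior numerator on $A_\varepsilon$ above by $e^{n(M^\star-\delta_\varepsilon+\cdot)}$ and the normalizer below via $\Pi(B_\eta)>0$, giving an exponentially small bound. The differences are only cosmetic: you use $\eta=\delta_\varepsilon/4$ with the random deviation $R_n$ where the paper uses a fixed $\eta<\delta_\varepsilon/3$ and the event $\mathcal{E}_n(\eta)$, and you make explicit the measurability and finiteness-of-normalizer points that the paper leaves implicit.
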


\medskip
\noindent
\textit{Proof.}
Fix $\varepsilon>0$.
By Assumption~\ref{asm:general-bps-consistency}(iv), there exists
$\delta_\varepsilon>0$ such that
\[
    \sup_{\theta\in A_\varepsilon}M(\theta)
    \leq M^\star-\delta_\varepsilon.
\]
Choose $0<\eta<\delta_\varepsilon/3$ and define
\begin{equation*}
    \mathcal{E}_n(\eta)
    :=
    \left\{
    \sup_{\theta\in\Theta}
    \left|
    n^{-1}\sum_{i=1}^n\log m_\theta(Y_i)-M(\theta)
    \right|
    \le \eta
    \right\}.
\end{equation*}
By Assumption~\ref{asm:general-bps-consistency}(ii), \(P^\ast\{\mathcal{E}_n(\eta)\}\to1\). 

On $\mathcal E_n(\eta)$, for every $\theta\in A_\varepsilon$,
\[
    \frac{1}{n}\log L_n(\theta)
    \leq
    M(\theta)+\eta
    \leq
    M^\star-\delta_\varepsilon+\eta,
\]
and hence
\[
    \int_{A_\varepsilon}L_n(\theta)\,\Pi(d\theta)
    \leq
    \exp\{n(M^\star-\delta_\varepsilon+\eta)\}.
\]
Similarly, for every $\theta\in B_\eta$, we have 
\[
    \frac{1}{n}\log L_n(\theta)
    \geq
    M(\theta)-\eta
    \geq
    M^\star-2\eta.
\]
Therefore,
\[
    \int_\Theta L_n(\theta)\,\Pi(d\theta)
    \geq
    \int_{B_\eta}L_n(\theta)\,\Pi(d\theta)
    \geq
    \Pi(B_\eta)\exp\{n(M^\star-2\eta)\}.
\]
Note that Assumption~\ref{asm:general-bps-consistency}(v) ensures that
$\Pi(B_\eta)>0$, hence the right hand side above is strictly positive. 
Consequently, on $\mathcal E_n(\eta)$,
\[
    \Pi_n(A_\varepsilon\mid Y_{1:n},\mathcal H)
    \leq
    \Pi(B_\eta)^{-1}
    \exp\{-n(\delta_\varepsilon-3\eta)\}.
\]
Since $\eta<\delta_\varepsilon/3$, the right-hand side converges to zero
as $n\to\infty$. Together with
$P^\star\{\mathcal E_n(\eta)\}\to1$, this proves the result.
\hfill\(\square\)
\medskip

\begin{cor}[Correct specification and parameter consistency]
\label{cor:correctly-specified}
Suppose Assumption~\ref{asm:general-bps-consistency} holds. 
Suppose further that the model is correctly specified and identifiable:
there exists $\theta_0\in\Theta$ such that
\[
    m_{\theta_0}=p^\star
    \quad\text{a.e.},
\]
and
\[
    m_\theta=p^\star\ \text{a.e.}
    \quad\Longrightarrow\quad
    \theta=\theta_0.
\]
Then $\Theta^\star=\{\theta_0\}$, and the posterior is consistent at
$\theta_0$.
\end{cor}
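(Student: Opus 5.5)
The plan is to show that under correct specification and identifiability the pseudo-true set collapses to $\{\theta_0\}$, and then read off consistency from Theorem~\ref{thm:general-bps-consistency}. The key tool is Gibbs' inequality, i.e.\ the nonnegativity of the Kullback--Leibler divergence.

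\textbf{Step 1: $\theta_0$ attains the supremum.} By Assumption~\ref{asm:general-bps-consistency}(i), $M(\theta)$ is finite for every $\theta$. In particular $M(\theta_0)=\mathbb{E}_{P^\ast}\log p^\ast(Y)$ is finite, because $m_{\theta_0}=p^\ast$ almost everywhere. So for every $\theta\in\Theta$ the decomposition
\[
    M(\theta)=\mathbb{E}_{P^\ast}\log p^\ast(Y)-\mathrm{KL}(p^\ast,m_\theta)
\]
is legitimate, with $\mathrm{KL}(p^\ast,m_\theta)=M(\theta_0)-M(\theta)$ finite. Since $m_\theta$ is a proper density, Jensen's inequality applied to $\log$ gives $\mathrm{KL}(p^\ast,m_\theta)\ge 0$. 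Also $\mathrm{KL}(p^\ast,m_{\theta_0})=0$. Hence $M(\theta)\le M(\theta_0)$ for all $\theta$, so $M^\star=M(\theta_0)$ and $\theta_0\in\Theta^\star$.

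\textbf{Step 2: no other maximizer exists.} Suppose $\theta\in\Theta^\star$. Then $\mathrm{KL}(p^\ast,m_\theta)=0$. I will use the equality case of Jensen's inequality for the strictly concave logarithm: $m_\theta/p^\ast$ is $P^\ast$-a.s.\ constant. Because both densities integrate to one, this gives $m_\theta=p^\ast$ $P^\ast$-a.e.

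\textbf{Main obstacle.} Upgrading ``$P^\ast$-a.e.'' to ``a.e.'' with respect to the dominating measure is the delicate step. On the set where $p^\ast>0$ the two densities agree. Therefore $\int_{\{p^\ast>0\}}m_\theta=1$, and since $m_\theta\ge0$ with total integral $1$, $m_\theta=0$ a.e.\ on $\{p^\ast=0\}$. Thus $m_\theta=p^\ast$ a.e., and identifiability forces $\theta=\theta_0$. This gives $\Theta^\star=\{\theta_0\}$.

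\textbf{Step 3: conclude.} Assumption~\ref{asm:general-bps-consistency} holds by hypothesis and $\Theta^\star=\{\theta_0\}$, so the second display of Theorem~\ref{thm:general-bps-consistency} applies directly with $\theta^\star=\theta_0$. This gives posterior consistency at $\theta_0$.
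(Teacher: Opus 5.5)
Your proof is correct and follows the same route as the paper: write $M(\theta)=\mathbb{E}_{P^\ast}\log p^\ast(Y)-\mathrm{KL}(p^\ast,m_\theta)$, use nonnegativity of the KL divergence with equality iff $m_\theta=p^\ast$ a.e., invoke identifiability to get $\Theta^\star=\{\theta_0\}$, and apply Theorem~\ref{thm:general-bps-consistency}. You also justify details the paper leaves implicit (finiteness of $\mathbb{E}_{P^\ast}\log p^\ast(Y)$ via $M(\theta_0)$, and the upgrade from $P^\ast$-a.e.\ to a.e.\ equality); one small correction is that the Jensen constant $c$ equals $1$ because $\log c=-\mathrm{KL}(p^\ast,m_\theta)=0$, since unit total mass by itself only gives $c\le 1$.
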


\medskip
\noindent
\textit{Proof.}
For any \(\theta\in\Theta\),
\begin{equation*}
    M(\theta)
    =
    \mathbb{E}_{P^\ast}\log p^\ast(Y)-\mathrm{KL}(p^\ast,m_\theta)
    \le
    \mathbb{E}_{P^\ast}\log p^\ast(Y),
\end{equation*}
Equality holds if and only if $m_\theta=p^\star$ almost everywhere.
By identifiability, this occurs only at $\theta=\theta_0$.
Hence $\Theta^\star=\{\theta_0\}$, and Theorem~\ref{thm:general-bps-consistency} applies.
\hfill\(\square\)
\medskip

\begin{rem}[Sufficient primitive conditions]
\label{rem:primitive-route}
A conventional sufficient set of conditions for Assumption~\ref{asm:general-bps-consistency}(ii)--(v) in a finite-dimensional model is as follows.
Suppose that $\Theta$ is compact, that $\theta\mapsto\log m_\theta(y)$ is continuous for
$P^\star$-almost every $y$, and that there exists an integrable envelope $G$ satisfying
\[
    \sup_{\theta\in\Theta}
    |\log m_\theta(y)|
    \leq G(y),
    \qquad {\rm and} \qquad 
    \mathbb{E}_{P^\star}[G(Y)]<\infty.
\]
Then a uniform law of large numbers gives Assumption~\ref{asm:general-bps-consistency}(ii), while continuity of $M$ and compactness of $\Theta$ imply existence of a maximizer and the separation condition in Assumption~\ref{asm:general-bps-consistency}(iii)--(iv).
If, in addition, $\Pi$ has full support on $\Theta$, then Assumption~\ref{asm:general-bps-consistency}(v) follows from continuity of $M$.
\end{rem}

\begin{rem}[Gaussian and softmax weights]
\label{rem:gaussian-softmax-consistency}
For Gaussian weights, it is often convenient to compactify the width parameter by \(u_k=(1+\sigma_k^2)^{-1}\in[0,1]\). The endpoint \(u_k=0\) represents the limiting predictive density obtained as \(\sigma_k^2\to\infty\), and \(u_k=1\) represents the limiting predictive density obtained as \(\sigma_k^2\to0\). Thus Theorem~\ref{thm:general-bps-consistency} applies to the compactified Gaussian-weight family after the induced predictive density is continuously extended to these endpoints. Posterior concentration at a boundary means concentration in neighborhoods of the corresponding endpoint in the \(u_k\)-scale; it does not require the prior to put positive mass exactly at \(\sigma_k^2=0\) or \(\sigma_k^2=\infty\). 

\end{rem}

\section{KL projection in the two-part normal example}
\label{supp:klproj}

This section gives the details on the numerical evaluation of the asymptotic predictive density reported in
Section~\ref{sec:gaussian-illustration} of the main text.
As in Section~\ref{sec:gaussian-illustration}, the analysis here focuses on the Dirac synthesis with the Gaussian weights when the data generating process is the mixture of two truncated normal distributions. 

\subsection{Setting and Dirac Gaussian-weight BPS marginal density}

Remember that the density of the data generating process, or the true density, is the two-part normal distribution, 
\begin{equation*}
    p^\ast(y)
    =
    \frac{1}{3}\,
    \mathcal{TN}_{(-\infty,0]}(y\mid0,1^2)
    +
    \frac{2}{3}\,
    \mathcal{TN}_{[0,\infty)}(y\mid0,2^2),
\end{equation*}
where $\mathcal{TN}_A$ denotes the truncated normal density whose support is
$A\subset\mathbb{R}$. 
To estimate this density, we have two agent densities: 
\[
    h_1(x)=\mathcal{N}(x\mid0,1^2)
    \qquad {\rm and} \qquad
    h_2(x)=\mathcal{N}(x\mid0,2^2).
\]
The components of the data generating process are not identical to the two agents: the former is the pair of two truncated normal distributions, while the latter is not truncated. Hence, the mixture BPS models is misspecified, being unable to cover the true density. Still, with an appropriate truncation of the two agent densities, the mixture BPS would provide an accurate estimate of the predictive density, by using agent 1 on the left (negative) half-line and agent 2 on the right. In other words, this example favors the outcome-dependent way of combining the given agent densities.

In this example, we consider two baseline models. Let
\[
    {\rm base} \in\{\mathrm{diff},\mathrm{match}\},
\]
which is the index for the baseline specification. For each choice of the baseline, define
\[
    h_0^{(\mathrm{diff})}(y)
    =
    \mathcal{N}(y\mid0,10^2)
    \qquad {\rm and} \qquad
    h_0^{(\mathrm{match})}(y)
    =
    \mathcal{N}(y\mid m_B,s_B^2),
\]
where $m_B=\sqrt{2/\pi}$ and $s_B^2=3-2/\pi$. These values of $m_B$ and $s_B^2$ are the mean and variance of $p^\ast$.

Since the agent densities are Gaussian, we can compute the marginal predictive density by using the result of Proposition~\ref{prp:gauss-marginal}. In this example, we substitute $(\mu_{h_1},\sigma_{h_1}^2)=(0,1)$ and $(\mu_{h_2},\sigma_{h_2}^2)=(0,4)$ in the expression of Proposition~\ref{prp:gauss-marginal} to obtain 
\begin{equation*}
    m_\theta^{({\rm base})}(y)
    =
    \left\{
        1-\sum_{k=1}^2 q_kA_k
    \right\}
    h_0^{({\rm base})}(y)
    +
    \sum_{k=1}^2
    q_kA_k
    \mathcal{N}\!\left(
        y\,\middle|\,a_k,r_k
    \right).
\end{equation*}
In this expression, the outcome-independent weights are 
\[
    c_k(\theta)=q_kA_k
    \qquad {\rm and} \qquad
    c_0(\theta)=1-\sum_{k=1}^2 c_k(\theta).
\]

As a special case, consider the density at a boundary of the parameter space, where $\sigma _1^2 \to \infty$ and $\sigma _2^2 \to \infty$. 
Then we have
$A_k\to1$,
$r_k\to\sigma_{h_k}^2$,
and
$a_k\to0$. 
Hence the weights become constant, $c_k(\theta) \to q_k$, and the baseline receives no weight, $c_0(\theta) \to 1- (q_1+q_2) = 0$. The marginal density reduces to, 
\begin{equation*}
    m_\theta^{(\infty)}(y)
    =
    q_1\mathcal{N}(y\mid0,1)
    +
    q_2\mathcal{N}(y\mid0,4).
\end{equation*}
Thus, in the case of $\sigma _1^2 \to \infty$ and $\sigma _2^2 \to \infty$, the combined density is simply the constant-weight linear pool of the original agent densities, without introducing the truncation. 

\subsection{Expected log score and numerical projection}

For each baseline, the population expected log score is given by
\begin{equation}
\begin{split}
    M_{\rm base}(\theta)
    &=
    \mathbb{E}_{P^\ast}
    \Big[
        \log m_\theta^{({\rm base})}(Y)
    \Big] \\
    &= \mathbb{E}_{P^\ast}\Big[ \log p^\ast(Y) \Big] - \mathrm{KL}\!\left(
        p^\ast,m_\theta^{({\rm base})}
    \right)
    \label{eq:supp-objective}
\end{split}
\end{equation}
Thus, maximizing $M_{\rm base}(\theta)$ gives the KL projection of $p^\ast$ onto the
Gaussian-weight BPS family. 
We assume the parameter space is compact and numerically compute such a maximizer, denoted by $\theta_{\rm base}^\star$. 

The integral with respect to $p^{\ast}$ in \eqref{eq:supp-objective} is numerically approximated by applying the midpoint rule after transforming each truncated-normal component to the unit interval \citep{DavisRabinowitz1984}. 
We first define $N_q=5000$ equally-spaced grids,
\[
    u_j=\frac{j-1/2}{N_q},
    \qquad
    j=1,\ldots,N_q.
\]
Then, for each component of $p^{\ast}$, we set the $N_q$ nodes as 
\[
    y_{1j}
    =
    \Phi^{-1}(u_j/2),
    \qquad {\rm and} \qquad
    y_{2j}
    =
    2\Phi^{-1}(1/2+u_j/2),
\]
for $j=1,\dots ,N_q$, where $\Phi$ is the cumulative disrtibution function of the standard normal distribution. 
The objective function is approximated as 
\begin{equation*}
    M_{\rm base}(\theta)
    \simeq
    \frac{1}{3N_q}
    \sum_{j=1}^{N_q}
    \log m_\theta^{({\rm base})}(y_{1j})
    +
    \frac{2}{3N_q}
    \sum_{j=1}^{N_q}
    \log m_\theta^{({\rm base})}(y_{2j}).
\end{equation*}
We numerically maximize this approximate objective function over the five parameters in $\theta = (\mu_1, \mu_2, \log(\sigma_1^2), \log(\sigma_2^2), \operatorname{logit}(q_1))$. 
In doing so, we use the L-BFGS-B algorithm \citep{ByrdEtAl1995}, as implemented in the \texttt{optim} function in R (\texttt{method = "L-BFGS-B"}), allowing for 2500 iterations at most for each initialization.
Since the objective is not globally concave, we consider 60 different initial values, combining several structured starting values with additional values generated uniformly over the parameter space.
To evaluate the objective function at the boundary, we conduct the optimization in the following three cases separately: finite $\sigma_1^2<\infty$ and $\sigma_2^2<\infty$; $\sigma_1^2 =\infty$ and $\sigma_2^2<\infty$; and $\sigma_1^2 < \infty$ and $\sigma_2^2 =\infty$. 
Finally, the parameter space is explored by the bounded finite-width searches: 
\[
    10^{-6}\le\sigma_k^2\le10^8,
    \qquad
    10^{-8}\le q_1\le1-10^{-8}
    \qquad {\rm and} \qquad 
    |\mu_k|\le12.
\]
Note that the case of $\sigma_1^2 = \sigma_2^2 = \infty$ is available in the closed form as derived in the previous subsection.

Once we finalize the value of $\theta^\ast_{\mathrm{base}}$ using the fixed-grid approximation above, we re-evaluate the objective function with the parameter fixed at $\theta^\ast_{\mathrm{base}}$ using the R function \texttt{integrate} over $(-\infty,0]$ and $[0,\infty)$, with relative tolerance $10^{-8}$ and at most 500 subdivisions. The functional values computed here are reported in Table~1 of the main text. 

\subsection{Results and interpretation}

Under the diffuse baseline, the pseudo-true values of the window parameters and the baseline share are, 
\[
    \sigma_1^{2\star}=\infty
    \qquad
    \sigma_2^{2\star}=119.23,
    \qquad {\rm and} \qquad 
    c_0(\theta_{\mathrm{diff}}^\star)=0.0208.
\]
Hence the first agent receives a constant weight, while the second agent
retains outcome dependence slightly by having finite $\sigma_2^{2\star}$. 
However, the difference from the result under $\sigma_1^2=\sigma_2^2=\infty$ in the expected log score is
$9.53\times10^{-4}$ and small. 

Under the moment-matched baseline, both window parameters are chosen to be finite; 
\[
    \sigma_1^{2\star}=1.736,
    \qquad
    \sigma_2^{2\star}=6.391
    \qquad {\rm and} \qquad 
    c_0(\theta_{\mathrm{match}}^\star)=0.5765.
\]
The improvement over the boundary case in the expected log score is $0.12959$, so finite
outcome-dependent weights are clearly preferred for this baseline.

\begin{table}[htbp]
\centering
\caption{
KL-projection parameters under the Dirac synthesis.
Each row reports the pseudo-true quantities under the indicated baseline.
Here $q_2^\star=1-q_1^\star$.
A dash indicates a parameter that is irrelevant and unidentified at an
infinite-width boundary.
}
\label{tab:supp-klproj-finite}
\small
\setlength{\tabcolsep}{6pt}
\begin{tabular}{lcccccc}
\hline
Baseline
& $q_1^\star$
& $\mu_1^\star$
& $\mu_2^\star$
& $c_0^\star$
& $c_1^\star$
& $c_2^\star$ \\
\hline
diffuse
& 0.34646
& --
& 1.9794
& 0.020833
& 0.34646
& 0.63270 \\
matched
& 0.31448
& -0.46476
& 4.7358
& 0.57647
& 0.24081
& 0.18271 \\
\hline
\end{tabular}
\end{table}

The computed values of $\theta^{\ast}_{\rm base}$ other than $\sigma_k^{\ast}$ are summarized in Table~\ref{tab:supp-klproj-finite}. Using these values, one can compute the tilted component, $\mathcal{N}\!\left( y\,\middle|\,a_k,r_k \right)$, derived in Section~S3.1. 
For the diffuse baseline, the second component has $a_2=0.064$, so the effect of tilting is small. 
Under the moment-matched baseline,
\[
    a_1=-0.170,
    \qquad {\rm and} \qquad 
    a_2=1.82,
\]
so the first component is tilted slightly to the left and the second
component to the right.
These results confirm the interpretation in the main text:
the diffuse-baseline projection remains close to a constant-weight pool,
whereas the moment-matched baseline supports finite outcome-dependent
weights.
In neither case do the Gaussian weights recover the target hard-switching
pattern.

\section{MCMC algorithms for the BPS models}
\label{supp:mcmc}

\subsection{Gaussian weights}
\label{sec:supp-mcmc-gaussian}

We consider the bias-adjusted Gaussian-weight BPS with $K$ agents.
For observation $i$, the weight of agent $j$ is
\begin{align*}
  w_j(x_{ji})
  =
  q_j
  \exp\left\{
    -\frac{(x_{ji}-\mu_j)^2}{2\sigma_j^2}
  \right\},
  \qquad j=1,\ldots,K,
\end{align*}
and the baseline weight is
\begin{align*}
  w_0(x_i)
  =
  1-\sum_{j=1}^K w_j(x_{ji}).
\end{align*}
We sample
$q=(q_1,\ldots,q_K)$,
$\mu=(\mu_1,\ldots,\mu_K)$,
$\sigma^2=(\sigma_1^2,\ldots,\sigma_K^2)$,
the latent states
$x_i=(x_{1i},\ldots,x_{Ki})$,
the allocation indicators
$z_i\in\{0,1,\ldots,K\}$,
and the calibration parameters $(\beta_j,v_j)$.

Below, $h_j$ denotes the predictive distribution of agent $j$.
We assume that independent simulation from $h_j$ is available.
When $h_j$ admits a density, we use the same notation for that density.
The formulation also covers predictive distributions represented by Monte Carlo particles, with $h_j$ then interpreted as the corresponding discrete empirical distribution.
We define
\[
  I_j=\{i:z_i=j\},
  \qquad
  n_j=|I_j|,
  \qquad
  j=0,1,\ldots,K.
\]

\paragraph{Sampling scheme.}
Up to normalizing constants, the joint distribution is
\begin{align*}
  p(y,z,x,\Theta)
  &\propto
  \prod_{i=1}^n
  \left[
    \{w_0(x_i)h_0(y_i)\}^{\mathbb{I}(z_i=0)}
    \prod_{j=1}^K
    \{w_j(x_{ji})N(y_i\mid x_{ji}-\beta_j,v_j)\}^{\mathbb{I}(z_i=j)}
  \right]
  \prod_{i=1}^n\prod_{j=1}^K h_j(x_{ji})       \\
  &\quad\times
  p(q)p(\mu)p(\sigma^2)p(\beta)p(v).
\end{align*}
The sampler therefore combines Gibbs steps based on complete conditional distributions with Metropolis--Hastings steps that correct only for the non-conjugate components.

\paragraph{(1) Updating allocation indicators $z_i$.}
For each $i$, normalize
\begin{align*}
  P(z_i=0\mid\cdot)
  &\propto
  w_0(x_i)h_0(y_i),\\
  P(z_i=j\mid\cdot)
  &\propto
  w_j(x_{ji})N(y_i\mid x_{ji}-\beta_j,v_j),
  \qquad j=1,\ldots,K,
\end{align*}
and sample
\begin{align*}
  z_i\mid\cdot \sim \mathrm{Cat}(p_{i0},p_{i1},\ldots,p_{iK}).
\end{align*}

\paragraph{(2) Updating latent predictive states $x_{ji}$.}
The complete conditional distribution of $x_{ji}$ has three cases, depending on the current value of $z_i$.
First, if $z_i=j$, then
\begin{align*}
  p(x_{ji}\mid z_i=j,\cdot)
  &\propto
  h_j(x_{ji})
  w_j(x_{ji})
  N(y_i\mid x_{ji}-\beta_j,v_j).
\end{align*}
For a general agent density $h_j$, we use an independence
Metropolis--Hastings update with proposal
\begin{align*}
  x_{ji}^{\ast}\sim h_j.
\end{align*}
Since the proposal density cancels with the agent density in the target,
the acceptance probability is
\begin{align*}
  \alpha_{x,ji}
  =
  \min\left\{
  1,
  \frac{
    w_j(x_{ji}^{\ast})
    N(y_i\mid x_{ji}^{\ast}-\beta_j,v_j)
  }{
    w_j(x_{ji}^{(t)})
    N(y_i\mid x_{ji}^{(t)}-\beta_j,v_j)
  }
  \right\}.
\end{align*}

If $z_i=k\ne j$ with $k\in\{1,\ldots,K\}$, $x_{ji}$ does not enter
the likelihood or allocation contribution for observation $i$, and
hence it can be sampled directly from
\begin{align*}
  x_{ji}\mid z_i=k,\cdot\sim h_j.
\end{align*}

Finally, if $z_i=0$, the baseline weight $w_0(x_i)$ depends on
$x_{ji}$. We again use the independence proposal
\begin{align*}
  x_{ji}^{\ast}\sim h_j
\end{align*}
and accept with probability
\begin{align*}
  \alpha_{x,ji}
  =
  \min\left\{
  1,
  \frac{
    w_0(x_{ji}^{\ast},x_{-ji}^{(t)})
  }{
    w_0(x_{ji}^{(t)},x_{-ji}^{(t)})
  }
  \right\}.
\end{align*}
In this Metropolis--Hastings ratio, the proposal density and the agent prior cancel.

When the agent density is Gaussian,
\[
  h_j(x)
  =
  N(x\mid\mu_{h_j},\sigma_{h_j}^2),
\]
the update for $z_i=j$ is available directly as a Gibbs step:
\begin{align*}
  x_{ji}\mid z_i=j,\cdot
  &\sim N(m_{x,ji},s_{x,j}^2),
\end{align*}
where
\begin{align*}
  (s_{x,j}^2)^{-1}
  &=
  \sigma_{h_j}^{-2}
  +\sigma_j^{-2}
  +v_j^{-1},\\
  m_{x,ji}
  &=
  s_{x,j}^2
  \left\{
    \mu_{h_j}\sigma_{h_j}^{-2}
    +\mu_j\sigma_j^{-2}
    +(y_i+\beta_j)v_j^{-1}
  \right\}.
\end{align*}
We use this conjugate special case in the numerical experiments.

\paragraph{(3) Updating the maximum weights $q$.}
Let $q\sim\mathrm{Dir}(\alpha_1,\ldots,\alpha_K)$.
The complete conditional distribution is
\begin{align*}
p(q\mid\cdot)
\propto
\mathrm{Dir}
\left(
q\mid
\alpha_1+n_1,\ldots,\alpha_K+n_K
\right)
\prod_{i\in I_0}
w_0(x_i\mid q,\mu,\sigma^2).
\end{align*}
We use the conjugate component
\begin{align*}
  q^\ast
    \sim
    \mathrm{Dir}
    (\alpha_1+n_1,\ldots,\alpha_K+n_K).
\end{align*}
as an independence Metropolis--Hastings proposal and accept with probability
\begin{align*}
  \alpha_q
  =\min\left\{1,
  \frac{\prod_{i\in I_0}w_0(x_i\mid q^\ast,\mu,\sigma^2)}
       {\prod_{i\in I_0}w_0(x_i\mid q^{(t)},\mu,\sigma^2)}
  \right\}.
\end{align*}
If $I_0=\varnothing$, the proposed $q^\ast$ is accepted automatically.

\paragraph{(4) Updating the region centers $\mu_j$.}
Let $\mu_j\sim N(m_{\mu,j}^{(0)},s_{\mu,j}^{2(0)})$.
The complete conditional distribution is
\begin{align*}
  p(\mu_j\mid\cdot)
  \propto
  N(\mu_j\mid m_{\mu,j}^{(0)},s_{\mu,j}^{2(0)})
  \prod_{i\in I_j}
  \exp\left\{-\frac{(x_{ji}-\mu_j)^2}{2\sigma_j^2}\right\}
  \prod_{i\in I_0}w_0(x_i\mid\mu_j).
\end{align*}
From the conjugate component, we propose
\begin{align*}
  \mu_j^\ast\sim N(m_{\mu,j},s_{\mu,j}^2),
\end{align*}
where
\begin{align*}
  (s_{\mu,j}^2)^{-1}
  &=(s_{\mu,j}^{2(0)})^{-1}+n_j\sigma_j^{-2},\\
  m_{\mu,j}
  &=s_{\mu,j}^2
  \left\{
    m_{\mu,j}^{(0)}(s_{\mu,j}^{2(0)})^{-1}
    +\sigma_j^{-2}\sum_{i\in I_j}x_{ji}
  \right\}.
\end{align*}
The proposal is accepted with probability
\begin{align*}
  \alpha_{\mu,j}
  =\min\left\{1,
  \frac{\prod_{i\in I_0}w_0(x_i\mid \mu_j^\ast,\mu_{-j},q,\sigma^2)}
       {\prod_{i\in I_0}w_0(x_i\mid \mu_j^{(t)},\mu_{-j},q,\sigma^2)}
  \right\}.
\end{align*}

\paragraph{(5) Updating $\sigma_j^2$.}
Let $\sigma_j^2\sim\mathrm{IG}(a_{\sigma,j},b_{\sigma,j})$.
The complete conditional distribution is
\begin{align*}
  p(\sigma_j^2\mid\cdot)
  &\propto
  \mathrm{IG}(\sigma_j^2\mid a_{\sigma,j},b_{\sigma,j})
  \prod_{i\in I_j}
  \exp\left\{-\frac{(x_{ji}-\mu_j)^2}{2\sigma_j^2}\right\}
  \prod_{i\in I_0}w_0(x_i\mid\sigma_j^2).
\end{align*}
Because the Gaussian weight does not include a normalizing constant, the number of observations in $I_j$, namely $n_j$, does not increase the shape parameter of the inverse-gamma distribution.
Therefore, the conjugate component is
\begin{align*}
  g_j(\sigma_j^2)
  =
  \mathrm{IG}\left(
    a_{\sigma,j},
    b_{\sigma,j}
    +\frac12\sum_{i\in I_j}(x_{ji}-\mu_j)^2
  \right).
\end{align*}
We propose
\begin{align*}
  \sigma_j^{2\ast}\sim g_j(\sigma_j^2)
\end{align*}
from this distribution and accept using the remaining ratio associated with the baseline constraint:
\begin{align*}
  \alpha_{\sigma,j}
  =\min\left\{1,
  \frac{\prod_{i\in I_0}w_0(x_i\mid \sigma_j^{2\ast},\sigma_{-j}^2,q,\mu)}
       {\prod_{i\in I_0}w_0(x_i\mid \sigma_j^{2(t)},\sigma_{-j}^2,q,\mu)}
  \right\}.
\end{align*}
If $\sigma_j^2$ is fixed at some specified value in the model of interest (e.g., the fixed-width Gaussian specification in the numerical experiments), skip Step~(5).

\paragraph{(6) Updating calibration parameters $\beta_j$ and $v_j$.}
Let $\beta_j\sim N(m_{\beta,j}^{(0)},s_{\beta,j}^{2(0)})$.
Conditional on $v_j$,
\begin{align*}
  \beta_j\mid\cdot
  \sim N(m_{\beta,j},s_{\beta,j}^2),
\end{align*}
where
\begin{align*}
  (s_{\beta,j}^2)^{-1}
  &=(s_{\beta,j}^{2(0)})^{-1}+n_jv_j^{-1},\\
  m_{\beta,j}
  &=s_{\beta,j}^2
  \left\{
    m_{\beta,j}^{(0)}(s_{\beta,j}^{2(0)})^{-1}
    +v_j^{-1}\sum_{i\in I_j}(x_{ji}-y_i)
  \right\}.
\end{align*}
If $v_j\sim\mathrm{IG}(a_{v,j},b_{v,j})$, then
\begin{align*}
  v_j\mid\cdot
  \sim
  \mathrm{IG}\left(
    a_{v,j}+\frac{n_j}{2},
    b_{v,j}+\frac12\sum_{i\in I_j}\{y_i-(x_{ji}-\beta_j)\}^2
  \right).
\end{align*}
When $n_j=0$, these complete conditional distributions reduce to their corresponding priors.

\subsection{Softmax weight}
\label{sec:supp-mcmc-softmax}
\subsubsection{Linear scores}

In the linear Softmax/MNL BPS specification, the baseline is treated as the reference category and we set
\begin{align*}
  \eta_{0i}&=0,\\
  \eta_{ji}&=b_{ji}'\gamma_j,
  \qquad
  b_{ji}=(1,x_{ji})',
  \qquad j=1,\ldots,K.
\end{align*}
The softmax weights are
\begin{align*}
w_{ji}
=
\frac{\exp(\eta_{ji})}
{\sum_{k=0}^{K}\exp(\eta_{ki})},
\qquad j=0,\ldots,K.
\end{align*}

\paragraph{(1) Updating allocation indicators $z_i$.}
For each $i$, normalize
\begin{align*}
  P(z_i=0\mid\cdot)&\propto w_{0i}h_0(y_i),\\
  P(z_i=j\mid\cdot)&\propto w_{ji}N(y_i\mid x_{ji}-\beta_j,v_j),
  \qquad j=1,\ldots,K
\end{align*}
and draw $z_i$ from $\mathrm{Cat}(p_{i0},p_{i1}, \ldots, p_{iK})$.

\paragraph{(2) Introducing P\'olya--Gamma auxiliary variables.}
For the multinomial logit component, we use blockwise P\'olya--Gamma augmentation \citep{polson2013bayesian}.
Treating the linear predictors of all categories other than $j$ as an offset, define
\begin{align*}
  c_{ji}=\log\sum_{k\ne j}\exp(\eta_{ki}),
  \qquad
  \psi_{ji}=\eta_{ji}-c_{ji},
  \qquad
  \kappa_{ji}=\mathbb{I}(z_i=j)-\frac12.
\end{align*}
For the conditional logit likelihood associated with category $j$, introduce
\begin{align*}
  \omega_{ji}\mid\cdot \sim \mathrm{PG}(1,\psi_{ji}).
\end{align*}
Writing
\begin{align*}
  \tilde\eta_{ji}=c_{ji}+\frac{\kappa_{ji}}{\omega_{ji}},
\end{align*}
the conditional likelihood for $\eta_{ji}$ becomes proportional to the Gaussian likelihood
\begin{align*}
  \exp\left\{-\frac{\omega_{ji}}{2}
  (\eta_{ji}-\tilde\eta_{ji})^2\right\}.
\end{align*}
In implementation, for each block $j=1,\ldots,K$, $c_{ji}$ is computed by fixing the current values of the other categories' $\eta_{ki}$, and then $\omega_{ji}$ and $\gamma_j$ are updated.
Since the baseline is fixed at $\eta_{0i}=0$, no coefficients are sampled for the baseline category.

\paragraph{(3) Updating Softmax coefficients $\gamma_j$.}
Let $\gamma_j\sim N(m_{\gamma,j}^{(0)},V_{\gamma,j}^{(0)})$.
Let $B_j$ denote the design matrix whose $i$th row is $b_{ji}'$, and define
\[
  \Omega_j
  =
  \operatorname{diag}(\omega_{j1},\ldots,\omega_{jn}),
\]
and
\[
  \tilde\eta_j
  =
  (\tilde\eta_{j1},\ldots,\tilde\eta_{jn})'.
\]
Then
\begin{align*}
  \gamma_j\mid\cdot
  \sim N(m_{\gamma,j},V_{\gamma,j}),
  \qquad j=1,\ldots,K,
\end{align*}
where
\begin{align*}
V_{\gamma,j}
&=
\left\{
  (V_{\gamma,j}^{(0)})^{-1}
  +
  B_j'\Omega_j B_j
\right\}^{-1},\\
m_{\gamma,j}
&=
V_{\gamma,j}
\left\{
  (V_{\gamma,j}^{(0)})^{-1}m_{\gamma,j}^{(0)}
  +
  B_j'\Omega_j\tilde\eta_j
\right\}.
\end{align*}

\paragraph{(4) Updating latent predictive states $x_{ji}$.}
Under the linear softmax specification,
\[
  \eta_{ji}
  =
  \gamma_{0j}+\gamma_{1j}x_{ji}.
\]
Before updating the latent states for agent $j$, we recompute
\[
  c_{ji}
  =
  \log\sum_{k\ne j}\exp(\eta_{ki})
\]
using the current values of all other categories, and resample
\[
  \omega_{ji}
  \mid\cdot
  \sim
  \mathrm{PG}
  \left(
    1,\,
    \eta_{ji}-c_{ji}
  \right),
  \qquad i=1,\ldots,n.
\]
We then redefine
\[
  \widetilde\eta_{ji}
  =
  c_{ji}
  +
  \frac{\mathbb{I}(z_i=j)-1/2}{\omega_{ji}}.
\]
This blockwise resampling is required because updates of the other
categories change the offset $c_{ji}$.
Conditional on the P\'olya--Gamma auxiliary variables, the complete
conditional kernel is
\begin{align*}
  p(x_{ji}\mid\cdot)
  \propto{}&
  h_j(x_{ji})
  \left\{
    N(y_i\mid x_{ji}-\beta_j,v_j)
  \right\}^{\mathbb{I}(z_i=j)}
  \\
  &\times
  \exp\left\{
    -\frac{\omega_{ji}}{2}
    \bigl(
      \eta_{ji}(x_{ji})-\widetilde\eta_{ji}
    \bigr)^2
  \right\}.
\end{align*}

For a general $h_j$, we use the independence proposal
\[
  x_{ji}^{\ast}\sim h_j.
\]
The corresponding Metropolis--Hastings acceptance probability is
\begin{align*}
  \alpha_{x,ji}
  =
  \min\Bigg\{1,\,
  &\left[
  \frac{
    N(y_i\mid x_{ji}^{\ast}-\beta_j,v_j)
  }{
    N(y_i\mid x_{ji}^{(t)}-\beta_j,v_j)
  }
  \right]^{\mathbb{I}(z_i=j)}
  \\
  &\times
  \exp\left[
    -\frac{\omega_{ji}}{2}
    \left\{
      \bigl(
        \eta_{ji}(x_{ji}^{\ast})-\widetilde\eta_{ji}
      \bigr)^2
      -
      \bigl(
        \eta_{ji}(x_{ji}^{(t)})-\widetilde\eta_{ji}
      \bigr)^2
    \right\}
  \right]
  \Bigg\}.
\end{align*}

When
\[
  h_j(x)
  =
  N(x\mid\mu_{h_j},\sigma_{h_j}^2),
\]
the complete conditional is Gaussian and the
Metropolis--Hastings step can be replaced by the Gibbs update
\begin{align*}
  x_{ji}\mid\cdot
  &\sim
  N(m_{x,ji},s_{x,ji}^2),
\end{align*}
where
\begin{align*}
  (s_{x,ji}^2)^{-1}
  &=
  \sigma_{h_j}^{-2}
  +\mathbb{I}(z_i=j)v_j^{-1}
  +\omega_{ji}\gamma_{1j}^2,\\
  m_{x,ji}
  &=
  s_{x,ji}^2
  \left\{
    \mu_{h_j}\sigma_{h_j}^{-2}
    +\mathbb{I}(z_i=j)(y_i+\beta_j)v_j^{-1}
    +\omega_{ji}\gamma_{1j}
      (\widetilde\eta_{ji}-\gamma_{0j})
  \right\}.
\end{align*}
This conjugate update is used in the numerical experiments.

\paragraph{(5) Updating calibration parameters.}
Since the Softmax weights do not depend on $(\beta_j,v_j)$, $\beta_j$ and $v_j$ can be updated by Gibbs sampling from the same normal and inverse-gamma distributions as in Gaussian weight.

\subsubsection{Quadratic scores}

The quadratic softmax specification uses quadratic score functions to represent nonlinear switching between a central region and the tails:
\begin{align*}
  \eta_{0i}&=0,\\
  \eta_{ji}
  &=
  b_{ji}'\gamma_j,
  \qquad
  b_{ji}=(1,x_{ji},x_{ji}^2)',
  \qquad j=1,\ldots,K.
\end{align*}
The Softmax weights are, as in the linear softmax specification,
\begin{align*}
w_{ji}
=
\frac{\exp(\eta_{ji})}
{\sum_{k=0}^{K}\exp(\eta_{ki})}.
\end{align*}

\paragraph{(1) Updating allocation indicators $z_i$.}
As in the linear softmax specification, $z_i$ is updated from the categorical distribution obtained by normalizing
\begin{align*}
  P(z_i=0\mid\cdot)&\propto w_{0i}h_0(y_i),\\
  P(z_i=j\mid\cdot)&\propto w_{ji}N(y_i\mid x_{ji}-\beta_j,v_j),
  \qquad j=1,\ldots,K.
\end{align*}

\paragraph{(2) Updating latent predictive states $x_{ji}$.}
Under the quadratic softmax specification, $\eta_{ji}$ is a quadratic function of $x_{ji}$.
Conditional on the P\'olya--Gamma auxiliary variables, $(\eta_{ji}-\tilde\eta_{ji})^2$ is therefore a quartic polynomial in $x_{ji}$, and the complete conditional distribution of $x_{ji}$ is no longer normal.
We consequently update $x_{ji}$ using a Metropolis--Hastings step based on the original Softmax likelihood, without conditioning on the P\'olya--Gamma auxiliary variables.

For a general agent density $h_j$, we use the independence proposal
\begin{align*}
  x_{ji}^{\ast}\sim h_j.
\end{align*}
Since the agent density cancels in the Metropolis--Hastings ratio, the acceptance probability is
\begin{align*}
  \alpha_{x,ji}
  =
  \min\Bigg\{
  1,\,
  \frac{
    w_{z_i i}
    (x_{ji}^{\ast},x_{-ji}^{(t)},\gamma)
  }{
    w_{z_i i}
    (x_{ji}^{(t)},x_{-ji}^{(t)},\gamma)
  }
  \left[
  \frac{
    N(y_i\mid x_{ji}^{\ast}-\beta_j,v_j)
  }{
    N(y_i\mid x_{ji}^{(t)}-\beta_j,v_j)
  }
  \right]^{\mathbb{I}(z_i=j)}
  \Bigg\}.
\end{align*}
Here $w_{z_i i}$ denotes the softmax probability of the currently allocated component, including the baseline when $z_i=0$.

When the agent density is Gaussian, a more efficient proposal is available. 
If $z_i=j$, we use the proposal proportional to the product of the agent density and calibration likelihood,
\[
  h_j(x_{ji})
  N(y_i\mid x_{ji}-\beta_j,v_j),
\]
which is
\[
  N(m_{ji}^{\ast},s_{ji}^{2\ast})
\]
with
\begin{align*}
  (s_{ji}^{2\ast})^{-1}
  &=
  \sigma_{h_j}^{-2}+v_j^{-1},\\
  m_{ji}^{\ast}
  &=
  s_{ji}^{2\ast}
  \left\{
    \mu_{h_j}\sigma_{h_j}^{-2}
    +(y_i+\beta_j)v_j^{-1}
  \right\}.
\end{align*}
If $z_i\ne j$, we simply draw $x_{ji}^{\ast}\sim h_j$.
For this Gaussian proposal, the agent density and calibration likelihood cancel, and the acceptance probability reduces to
\begin{align*}
  \alpha_{x,ji}
  =
  \min\left\{
  1,
  \frac{
    w_{z_i i}(x_{ji}^{\ast},x_{-ji}^{(t)},\gamma)
  }{
    w_{z_i i}(x_{ji}^{(t)},x_{-ji}^{(t)},\gamma)
  }
  \right\}.
\end{align*}
This Gaussian special case is used in the numerical experiments.

\paragraph{(3) Updating P\'olya--Gamma auxiliary variables and Softmax coefficients $\gamma_j$.}
After updating $x$, we update the coefficient blocks sequentially for $j=1,\ldots,K$. For each $j$, we first recompute $c_{ji}$ using the current values of all other categories, then sample the P\'olya--Gamma variables and update $\gamma_j$.
\begin{align*}
  c_{ji}=\log\sum_{k\ne j}\exp(\eta_{ki}),
  \qquad
  \omega_{ji}\mid\cdot\sim \mathrm{PG}(1,\eta_{ji}-c_{ji}),
  \qquad
  \tilde\eta_{ji}=c_{ji}+\frac{\mathbb{I}(z_i=j)-1/2}{\omega_{ji}}.
\end{align*}
Using $b_{ji}=(1,x_{ji},x_{ji}^2)'$, the complete conditional distribution of $\gamma_j$ is the same multivariate normal distribution as in the linear softmax specification. Let $\Omega_j = \operatorname{diag}(\omega_{j1},\ldots,\omega_{jn})$ as in the linear softmax specification. Then
\begin{align*}
  \gamma_j\mid\cdot\sim N(m_{\gamma,j},V_{\gamma,j}),
  \qquad j=1,\ldots,K,
\end{align*}
with
\begin{align*}
  V_{\gamma,j}
  &=\left\{(V_{\gamma,j}^{(0)})^{-1}+B_j'\Omega_jB_j\right\}^{-1},\\
  m_{\gamma,j}
  &=V_{\gamma,j}
  \left\{(V_{\gamma,j}^{(0)})^{-1}m_{\gamma,j}^{(0)}
  +B_j'\Omega_j\tilde\eta_j\right\}.
\end{align*}

\paragraph{(4) Updating calibration parameters.}
As in the Gaussian-weight and linear softmax specifications, $\beta_j$ and $v_j$ are updated from
\begin{align*}
  \beta_j\mid\cdot&\sim N(m_{\beta,j},s_{\beta,j}^2),\\
  v_j\mid\cdot&\sim
  \mathrm{IG}\left(
    a_{v,j}+\frac{n_j}{2},
    b_{v,j}+\frac12\sum_{i\in I_j}\{y_i-(x_{ji}-\beta_j)\}^2
  \right).
\end{align*}

\section{Additional details for the numerical experiments}
\label{supp:num}

\subsection{Common BPS specification}
\label{sec:supp-num-common}

The baseline density is fixed at $h_0(y)=\mathcal{N}(y\mid0,10^2)$.
For Gaussian weights, $w_0(x)=1-\sum_j w_j(x_j)$, whereas under the softmax specifications the baseline is the reference category with $\eta_{0i}=0$.

Each agent $j=1,2$ is associated with a latent state $x_j\sim h_j$.
Conditional on allocation to agent $j$, the calibrated observation model is
\[
  y\mid z=j,x_j,\beta_j,v_j
  \sim N(x_j-\beta_j,v_j).
\]
Here $\beta_j$ and $v_j$ are location and variance calibration parameters, respectively.
The corresponding priors are given in Section~\ref{sec:supp-num-priors}.

\subsection{Competing Methods}

\subsubsection{Equal-weight pool}

The equal-weight benchmark is
$p(y)=\{h_0(y)+h_1(y)+h_2(y)\}/3$ and requires no estimation.

\subsubsection{Linear pool}

We consider the linear pool
\begin{align*}
  p(y;w) = w_0 h_0(y) + w_1 h_1(y) + w_2 h_2(y),
  \qquad
  w_j\ge 0,
  \quad
  \sum_{j=0}^2 w_j = 1.
\end{align*}
The weight vector $w$ is estimated by maximizing the training-sample average log score
$n^{-1}\sum_{i=1}^n\log p(y_i;w)$
over the simplex.

\subsubsection{GDFC}

GDFC is an implementation of the generalized density forecast combination (GDFC) approach of \citet{kapetanios2015generalised}, adapted to the comparisons in this paper.
The numerical experiments in \citet{kapetanios2015generalised} combine two component densities using piecewise weight functions based on indicator bases.
For a fair comparison with the BPS methods, however, we use the same three densities available to the BPS specifications, $(h_0,h_1,h_2)$, as component densities in the combination.

\paragraph{Model.}
Let $q_j(y)=h_j(y)$ for $j=0,1,2$, and let
$-\infty=r_0<r_1<\cdots<r_{p-1}<r_p=\infty$
define the $p$ regions.
The GDFC density used here is
\begin{align}
  p_{\mathrm{GDFC}}(y\mid \nu,\mathcal S)
  =
  \sum_{j=0}^{2}\sum_{s=1}^{p}
  \tilde\nu_{js}\, q_j(y)\,\mathbb{I}(r_{s-1}\le y < r_s),
  \label{eq:gdfc_density}
\end{align}
where $\mathcal S=(p,r_1,\ldots,r_{p-1})$ denotes the region-partition specification and $\tilde\nu_{js}$ are normalized combination coefficients.

For each component density, define
\begin{align*}
  \kappa_{js}
  =
  \int_{r_{s-1}}^{r_s} q_j(u)\,du,
  \qquad j=0,1,2,\quad s=1,\ldots,p.
\end{align*}
For unnormalized weights $\nu_{js}>0$, set
\begin{align}
  \tilde\nu_{js}
  =
  \frac{\nu_{js}}{\sum_{\ell=0}^{2}\sum_{m=1}^{p}\nu_{\ell m}\kappa_{\ell m}}.
  \label{eq:gdfc_normweight}
\end{align}
By construction,
$\int p_{\mathrm{GDFC}}(y\mid\nu,\mathcal S)\,dy=1$,
so \eqref{eq:gdfc_density} is a proper density.

In the implementation, we reparameterize the positive coefficients as $\nu_{js}=\exp(\alpha_{js})$.
Since \eqref{eq:gdfc_normweight} is invariant to a common shift in the $\alpha_{js}$, we fix $\alpha_{01}=0$ for identification and numerical stability and optimize over the remaining parameters.

\paragraph{Interpretation of the weight functions.}
The corresponding region-wise coefficient is
\[
  w_j(y)
  =
  \sum_{s=1}^p
  \tilde\nu_{js}\mathbb I(r_{s-1}\le y<r_s),
  \qquad j=0,1,2,
\]
so that
$p_{\mathrm{GDFC}}(y\mid\nu,\mathcal S)
 =\sum_{j=0}^2w_j(y)q_j(y)$.
For visualization and weight-recovery diagnostics, we additionally define the pointwise-normalized GDFC coefficients
\[
  \bar w_j^{\mathrm{GDFC}}(y)
  =
  \frac{\tilde\nu_{j,s(y)}}
       {\sum_{\ell=0}^{2}\tilde\nu_{\ell,s(y)}},
  \qquad j=0,1,2,
\]
where $s(y)$ denotes the region containing $y$.
These normalized coefficients sum to one at each $y$ and are therefore more directly comparable with the BPS mixture weights.
The predictive density itself continues to use the globally normalized coefficients $\tilde\nu_{js}$ in \eqref{eq:gdfc_density}.

\paragraph{Number of regions and threshold grid.}
The numerical experiments of \citet{kapetanios2015generalised} use indicator bases and search over candidates on a fixed threshold grid.
We use the same candidate set for both DGPs,
\[
  p\in\{2,3,4\},
  \qquad
  r\in\{-3,-2.5,\ldots,2.5,3\},
\]
so that GDFC is not given a DGP-specific threshold grid.
The choices $p=2,3,4$ allow one, two, and three interior thresholds, respectively.
Thus, this candidate set can contain both the one-boundary structure of DGP-B and the two-boundary structure of DGP-A, while the actual boundary is selected by validation log score.
Restricting the grid to increments of 0.5 and excluding $p > 4$ avoids giving GDFC an excessive advantage from a very fine grid search and also reduces computation time.

\paragraph{Specification selection.}
\citet{kapetanios2015generalised} select $p$ using a pseudo-out-of-sample log score.
Following this idea, we use ordered validation rather than a random split, preserving the temporal ordering of the training sample.

For each candidate specification $\mathcal S$, we compute the average one-step-ahead log score over the validation block using the GDFC model estimated only from data prior to the start of each validation time point:
\begin{align*}
  \ell_{\mathrm{val}}(\mathcal S)
  =
  \frac{1}{|\mathcal I_{\mathrm{val}}|}
  \sum_{t\in \mathcal I_{\mathrm{val}}}
  \log p_{\mathrm{GDFC}}\bigl(y_t\mid \hat\alpha_{t-1}(\mathcal S),\mathcal S\bigr).
\end{align*}
Here $\hat\alpha_{t-1}(\mathcal S)$ denotes the estimator under specification $\mathcal S$ based on $(y_1,\ldots,y_{t-1})$.
The selected specification is
\begin{align*}
  \widehat{\mathcal S}
  =
  \operatorname*{arg\,max}_{\mathcal S} \ell_{\mathrm{val}}(\mathcal S).
\end{align*}

\paragraph{Estimation.}
For a fixed specification $\mathcal S$, the parameter vector $\alpha$ governing the coefficients, and hence the weights $\nu$, is estimated by maximizing the average log score on the training sample:
\begin{align*}
  \ell(\alpha;\mathcal S)
  =
  \frac{1}{n}
  \sum_{i=1}^{n}
  \log p_{\mathrm{GDFC}}(y_i\mid \alpha,\mathcal S).
\end{align*}
In numerical optimization, we solve
\begin{align*}
  \hat\alpha(\mathcal S)
  =
  \operatorname*{arg\,max}_{\alpha}
  \left\{
    \ell(\alpha;\mathcal S)
    - \lambda \sum \alpha_{js}^{2}
  \right\}.
\end{align*}
For numerical stability, we set the ridge penalty to $\lambda=10^{-4}$.

After $\widehat{\mathcal S}$ is selected, $\alpha$ is re-estimated using the full training sample.
The resulting predictive density is $p_{\mathrm{GDFC}}(y\mid\hat\alpha,\widehat{\mathcal S})$.
Auxiliary predictive samples from this density are used to evaluate sample-based criteria such as CRPS and interval coverage.

\subsection{Priors and Hyperparameters}
\label{sec:supp-num-priors}

The default prior specifications are as follows.

\paragraph{Gaussian-weight.}
We use
\[
  q\sim\mathrm{Dir}(2,2),
  \qquad
  \mu_j\sim N(0,4),
  \qquad j=1,2.
\]
For the Gaussian-weight specification with estimated widths,
\[
  \sigma_j^2\sim\mathrm{IG}(4,30),
  \qquad
  E[\sigma_j^2]=10,
  \qquad j=1,2,
\]
under the inverse-gamma shape--scale convention used here.
The chains are initialized at $\sigma_1^2=\sigma_2^2=10$, with $10^{-4}<\sigma_j^2<10^4$ imposed as numerical bounds.
The fixed-width specification instead sets $\sigma_1^2=\sigma_2^2=1$.

\paragraph{Linear softmax weight.}
For $j=1,2$,
\[
  \gamma_j
  =
  (\gamma_{0j},\gamma_{1j})'
  \sim N(0,10I_2).
\]

\paragraph{Quadratic softmax weight.}
For $j=1,2$,
\[
  \gamma_j
  =
  (\gamma_{0j},\gamma_{1j},\gamma_{2j})'
  \sim N(0,10I_3).
\]

\paragraph{Calibration prior.}
For $j=1,2$, we use
$\beta_j\sim N(0,1)$ and $v_j\sim\mathrm{IG}(5,1)$,
so that $E[v_j]=0.25$.
The same prior is used for both agents.

\subsection{Evaluation and figure construction}
\label{sec:supp-num-evaluation}

\paragraph{Predictive-density evaluation.}
Let
\[
    \psi=(\theta,\beta,\tau^2)
\]
collect the weight parameters and calibration parameters.
For a fixed value of $\psi$, the predictive density under the mixture BPS specification is
\begin{equation}
\begin{split}
p(y\mid\mathcal H,\psi)
={}&
\int
\Bigg[
    w_0(x;\theta)h_0(y)
    +
    \sum_{k=1}^K
    w_k(x;\theta)
    \mathcal N
    \left(
        y\mid x_k-\beta_k,\tau_k^2
    \right)
\Bigg]
\prod_{k=1}^K h_k(x_k)\,dx ,
\end{split}
\label{eq:supp-conditional-predictive}
\end{equation}
where $x=(x_1,\ldots,x_K)'$ denotes the latent agent states.
The posterior predictive density is obtained by averaging (\ref{eq:supp-conditional-predictive}) over the posterior distribution of $\psi$,
\begin{equation}
p(y\mid\mathcal H,\mathcal D)
=
\int
p(y\mid\mathcal H,\psi)
\pi(\psi\mid\mathcal D)\,d\psi .
\label{eq:supp-posterior-predictive}
\end{equation}
Given retained posterior draws
$\psi^{(1)},\ldots,\psi^{(S)}$, we approximate
(\ref{eq:supp-posterior-predictive}) by
\begin{equation}
\widehat p(y\mid\mathcal H,\mathcal D)
=
\frac{1}{S}
\sum_{s=1}^S
p(y\mid\mathcal H,\psi^{(s)}).
\label{eq:supp-posterior-predictive-mc}
\end{equation}

For the equal-weight pool and linear pool, the predictive density is
available directly as a finite mixture. For GDFC, the fitted predictive
density is evaluated directly from its piecewise representation.

\paragraph{Gaussian-weight BPS.}
For the fixed-width Gaussian-weight specification (BPS-G0) and the
Gaussian-weight specification with estimated widths (BPS-G), the
integration over the latent agent states in
(\ref{eq:supp-conditional-predictive}) is available analytically.
In particular, using the notation of Proposition~\ref{prp:gauss-marginal},
the conditional predictive density for posterior draw $s$ is
\begin{equation}
\begin{split}
p_G^{(s)}(y)
={}&
\left\{
1-\sum_{k=1}^K q_k^{(s)}A_k^{(s)}
\right\}h_0(y)
\\
&+
\sum_{k=1}^K
q_k^{(s)}A_k^{(s)}
\mathcal N
\left(
y\;\middle|\;
a_k^{(s)}-\beta_k^{(s)},
\,
\tau_k^{2(s)}+r_k^{(s)}
\right).
\end{split}
\label{eq:supp-gaussian-density-draw}
\end{equation}
Thus, no simulation of the latent agent states is required when
evaluating the predictive density for these specifications.
The marginal posterior predictive density is obtained by averaging
(\ref{eq:supp-gaussian-density-draw}) over the retained posterior
draws as in (\ref{eq:supp-posterior-predictive-mc}).
For BPS-G0, the window parameters $\sigma_k^2$ entering
$A_k^{(s)}$, $a_k^{(s)}$, and $r_k^{(s)}$ are fixed, whereas they vary
across posterior draws for BPS-G.

\paragraph{Softmax-weight BPS.}
For the linear and quadratic softmax specifications (BPS-S1 and
BPS-S2), the softmax normalization depends jointly on the latent agent
states, and the integral in
(\ref{eq:supp-conditional-predictive}) does not have the same closed
form. We therefore use Monte Carlo integration.
For a fixed posterior draw $\psi^{(s)}$, let
\[
    x^{(1)},\ldots,x^{(M)}
    \overset{\mathrm{iid}}{\sim}
    \prod_{k=1}^K h_k(x_k).
\]
Then
\begin{equation*}
\begin{split}
\widehat p_S^{(s)}(y)
=
\frac{1}{M}
\sum_{m=1}^M
\Bigg[
&
w_0(x^{(m)};\theta^{(s)})h_0(y)
\\
&+
\sum_{k=1}^K
w_k(x^{(m)};\theta^{(s)})
\mathcal N
\left(
y\;\middle|\;
x_k^{(m)}-\beta_k^{(s)},
\tau_k^{2(s)}
\right)
\Bigg].
\end{split}
\end{equation*}
The allocation variable is marginalized analytically through the
mixture probabilities $w_0,\ldots,w_K$ rather than being sampled.
The marginal posterior predictive density is then approximated by
\begin{equation*}
\widehat p_S(y\mid\mathcal H,\mathcal D)
=
\frac{1}{S}
\sum_{s=1}^S
\widehat p_S^{(s)}(y).
\end{equation*}
For the log-score calculation, the same target is evaluated by a
Rao--Blackwellized Monte Carlo average over the retained posterior
draws and independently generated latent agent states. 

\paragraph{Posterior uncertainty in the predictive density.}
For each retained posterior draw
$\psi^{(s)}=(\theta^{(s)},\beta^{(s)},\tau^{2(s)})$,
let
\[
p^{(s)}(y)
=
p(y\mid\mathcal H,\psi^{(s)})
\]
denote the predictive density conditional on that draw.
For BPS-G0 and BPS-G, $p^{(s)}(y)$ is available analytically from
(\ref{eq:supp-gaussian-density-draw}).

For BPS-S1 and BPS-S2, we approximate the latent-state integral by
\[
\widehat p^{(s)}(y)
=
\frac{1}{M}
\sum_{m=1}^{M}
\left[
w_0(x^{(m)};\theta^{(s)})h_0(y)
+
\sum_{k=1}^{K}
w_k(x^{(m)};\theta^{(s)})
\mathcal N
\left(
y\mid x_k^{(m)}-\beta_k^{(s)},\tau_k^{2(s)}
\right)
\right],
\]
where
\[
x^{(m)}
\overset{\mathrm{iid}}{\sim}
\prod_{k=1}^{K}h_k(x_k),
\qquad m=1,\ldots,M.
\]
We set $M=5{,}000$ and use the same latent-state draws
$\{x^{(m)}\}_{m=1}^{M}$ for all posterior draws.

Using $S_{\rm plot}=1{,}000$ posterior draws, the predictive-density
figures report
\[
\overline p(y)
=
\frac{1}{S_{\rm plot}}
\sum_{s=1}^{S_{\rm plot}} p^{(s)}(y)
\]
as the solid curve and
\[
\left[
Q_{0.15}\{p^{(s)}(y)\}_{s=1}^{S_{\rm plot}},
\,
Q_{0.85}\{p^{(s)}(y)\}_{s=1}^{S_{\rm plot}}
\right]
\]
as the 70\% pointwise posterior credible band, where
$p^{(s)}(y)$ is replaced by $\widehat p^{(s)}(y)$ for the softmax
specifications.
The band therefore quantifies posterior uncertainty about the
predictive density, rather than predictive uncertainty for a future
observation.

\paragraph{Weight-function figures.}
For the weight-function figures, we use a common scalar grid
$\mathcal G=\{g_1,\ldots,g_L\}$. For the Gaussian-weight
specifications, the two agent weights are evaluated at
$x_1=x_2=g_\ell$. For the linear and quadratic softmax specifications,
whose weights depend jointly on $(x_1,x_2)$, the displayed curves are
likewise one-dimensional slices along the diagonal
$x_1=x_2=g_\ell$.

For the four BPS specifications, the solid curves are posterior
medians and the shaded bands are the 15th--85th posterior quantile
intervals.The linear pool has constant weights. For GDFC, the figures
display the pointwise-normalized coefficients
$\bar w_j^{\mathrm{GDFC}}(y)$ defined above, rather than the globally
normalized coefficients $\tilde\nu_{js}$ that enter the predictive
density.
For GDFC, the figures display the pointwise-normalized coefficients $\bar w_j^{\mathrm{GDFC}}(y)$ defined above, rather than the globally normalized coefficients $\tilde\nu_{js}$ that enter the predictive density.

\section{Case study: details}
\label{supp:application}

This section provides the sample construction, prior specifications, and additional results for the theatre revenue application in Section~\ref{sec:application}.

\subsection{Data description}
\label{supp:app-data}

\subsubsection*{Data for the baseline}

\citet{pratt1985repetitive} note that the initial observations may reflect learning by the forecaster and focus on the post-summer forecasts as more stable assessments. 
Following this distinction, we reserve Programs~1--24 for constructing the historical baseline and use observations from Program~25 onward for BPS estimation and evaluation.

\subsubsection*{Outliers and missings}

For Harvard Square Theatre, Program~47 is excluded as in the original study because it is a Christmas-day opera and significantly differs from the regular screening programs. 
We also exclude observations with missing realized revenues: Programs~103--106 for Brattle and Programs~89--91 for Harvard Square. 
The resulting post-summer samples contain 78 and 63 observations, respectively. 
Ordering observations by program number, we allocate the first two thirds to training and the remainder to evaluation, as shown in Table~\ref{tab:app-samples}.

\begin{table}[htbp]
\centering
\caption{Training and evaluation samples for the theatre application.
The historical sample, Programs~1--24, is excluded from both sets.}
\label{tab:app-samples}
\begin{tabular}{lllr}
\toprule
Theatre & Sample & Programs & Observations\\
\midrule
Brattle & Training & 25--76 & 52\\
        & Evaluation & 77--102 & 26\\
Harvard Square & Training & 25--67, excluding 47 & 42\\
               & Evaluation & 68--88 & 21\\
\bottomrule
\end{tabular}
\end{table}

\subsubsection*{Transformation of raw data}

As discussed in the main text, we scale the raw forecasts/observations of revenues to account for differences in program duration. For program $i$, let $N_i$ be the number of screening days, $Y_i^{\mathrm{gross}}$ the realized gross revenue, and $Q_{ir}$ the reported forecast at level $r$. The response and inputs are defined as 
\begin{equation*}
    y_i=\log\left(\frac{Y_i^{\mathrm{gross}}}{N_i}\right),
    \qquad
    x_{ir}=\log\left(\frac{Q_{ir}}{N_i}\right),
    \quad r\in\mathcal R.
\end{equation*}

Figure~\ref{fig:app-histograms} shows the distributions of log daily revenue in the training and evaluation samples. 
For Brattle, the evaluation sample has a somewhat lower mean and a narrower observed range. 
For Harvard Square, the centers of the two samples are similar, while the evaluation sample includes a relatively large observation in the right tail. 
The historical observations used to construct the baseline are not included in these histograms.

\begin{figure}[htbp]
\centering
\includegraphics[width=0.85\linewidth]
{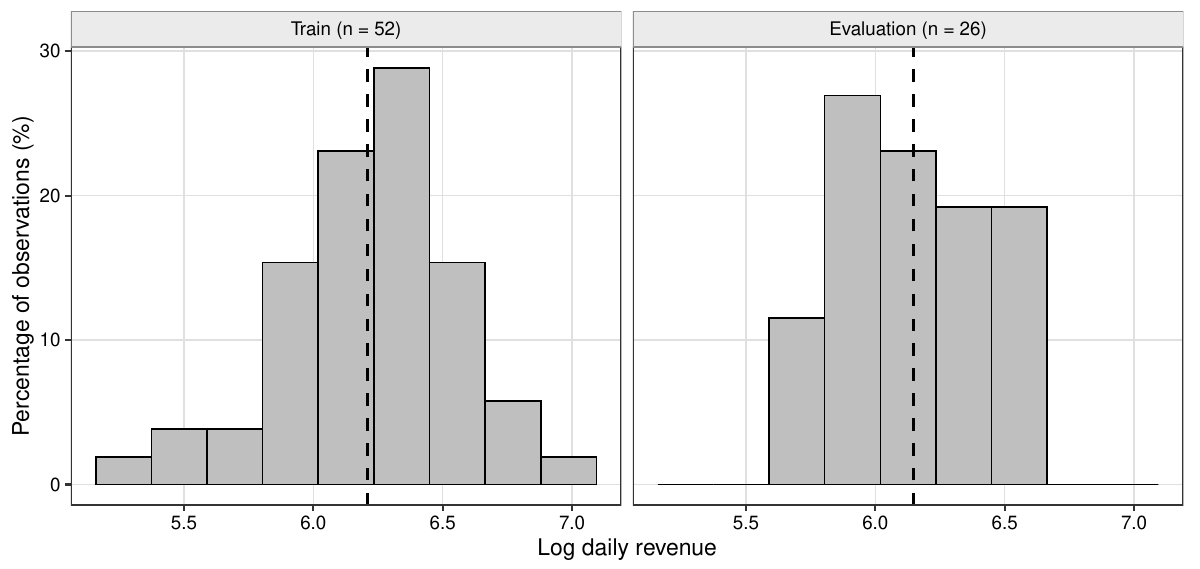}\\[0.6em]
\includegraphics[width=0.85\linewidth]
{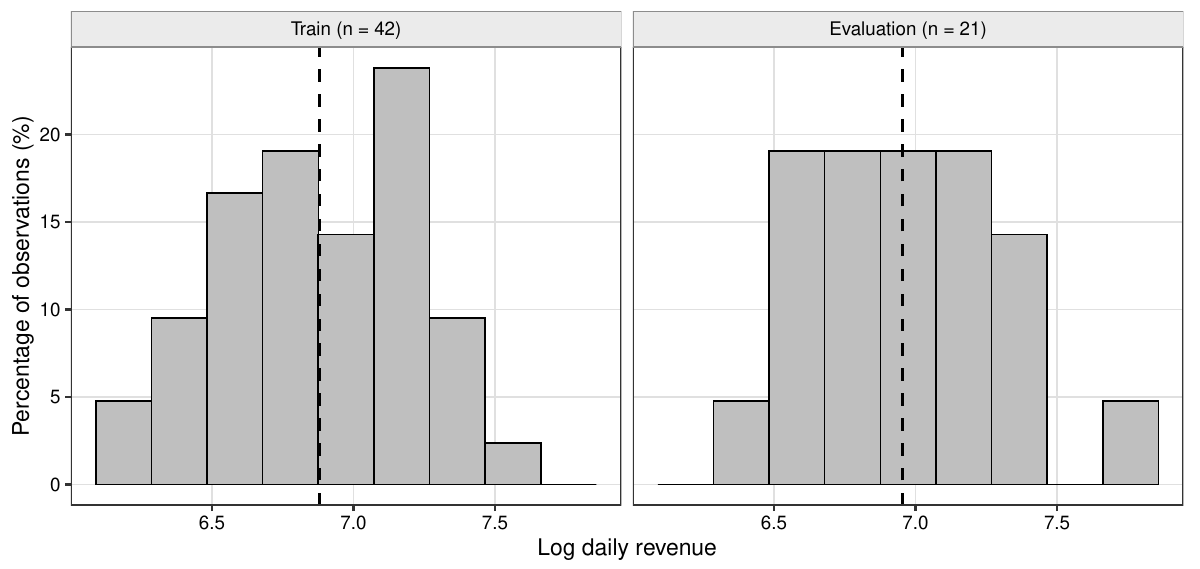}
\caption{Distributions of log daily revenue in the post-summer samples
for Brattle Theatre (top) and Harvard Square Theatre (bottom).
Within each row, the training sample is on the left and the evaluation
sample on the right. Bar heights are percentages of observations
within each sample, and vertical dashed lines mark sample means.}
\label{fig:app-histograms}
\end{figure}

\subsection{Historical baseline and prior specification}
\label{supp:app-priors}

\paragraph{Historical baseline and input standardization.}
For each theatre, let $\bar y_{\mathrm{pre}}$ and $s_{\mathrm{pre}}^2$ denote the sample mean and variance of log daily revenue in Programs~1--24. 
We fix
\[
    h_0(y)=\mathcal N(y\mid m_0,s_0^2),
    \qquad m_0=\bar y_{\mathrm{pre}},
    \quad s_0^2=s_{\mathrm{pre}}^2.
\]
The resulting baselines are
\begin{align*}
    h_0^{\mathrm{Brattle}}(y)
    &=\mathcal N(y\mid6.2557,0.1180),\\
    h_0^{\mathrm{Harvard}}(y)
    &=\mathcal N(y\mid6.8794,0.0570).
\end{align*}
The second argument after the mean is the variance.
These densities provide historical forecast components that do not use the current subjective reports. 
The pre-summer observations do not enter the likelihood for the BPS parameters; they determine the baseline and the reference scale used for prior specification.

Define
\begin{equation}
    u_{ir}=\frac{x_{ir}-m_0}{s_0},
    \qquad d_r=\Phi^{-1}(r),
    \qquad \widetilde u_{ir}=u_{ir}-d_r,
    \quad r\in\mathcal R,
    \label{eq:app-standardization}
\end{equation}
where $\Phi$ is the standard normal distribution function. 
In the order of the probability levels in $\mathcal R$,
\[
    (d_r)_{r\in\mathcal R}=(-2.326,-0.674,0,0.674,2.326).
\]
Thus, $\widetilde u_{ir}=0$ corresponds to the historical baseline's
$r$-quantile. 
Standardization applies to the weight functions only; the response and the Gaussian calibration components in (\ref{eq:deg_ba}) remain on the log daily revenue scale.

\paragraph{Gaussian weights.}
For this application, we use
\begin{equation*}
    w_{ir}=0.85\,q_r
    \exp\left\{-\frac{(u_{ir}-\mu_r)^2}{2\sigma_r^2}\right\},
    \qquad
    w_{i0}=1-\sum_{r\in\mathcal R}w_{ir},
\end{equation*}
where $w_{ir}=w_r(x_i;\theta)$ and $q=(q_r)_{r\in\mathcal R}$.
The factor $0.85$ retains a baseline weight of at least $0.15$ for every input vector. 
The priors are
\[
    q\sim\mathrm{Dir}(2,2,2,2,2),\qquad
    \mu_r\sim\mathcal N(d_r,0.5^2),\qquad
    \sigma_r^2\sim\mathrm{IG}(4,8.656).
\]
Hence,
\[
    E(\sigma_r^2)=\frac{8.656}{3}
    \simeq2.8853\simeq\frac{2}{\log 2}.
\]
At $\mu_r=d_r$ and this reference width, the Gaussian kernel falls to approximately half its maximum at a distance of two historical standard deviations from the reference location.

\paragraph{Softmax weights.}
Set the baseline score to $\eta_{i0}=0$. The linear and quadratic specifications use, respectively,
\begin{align*}
    \eta_{ir}&=A_r+B_r\widetilde u_{ir},\\
    \eta_{ir}&=A_r+B_r\widetilde u_{ir}
                 +C_r\widetilde u_{ir}^{\,2}.
\end{align*}
The weights are
\begin{equation*}
    w_{ir}=\frac{\exp(\eta_{ir})}
    {1+\sum_{s\in\mathcal R}\exp(\eta_{is})},
    \qquad
    w_{i0}=\frac{1}
    {1+\sum_{s\in\mathcal R}\exp(\eta_{is})}.
\end{equation*}
The affine transformation in (\ref{eq:app-standardization}) reparameterizes the linear and quadratic score families without changing their polynomial order.

To specify the intercept prior, let $b_0=0.15$ and
\[
    (\pi_r)_{r\in\mathcal R}
    =(0.0625,0.25,0.375,0.25,0.0625),
    \qquad
    a_r=\log\left\{\frac{(1-b_0)\pi_r}{b_0}\right\}.
\]
These choices give
\[
    (a_r)_{r\in\mathcal R}
    \simeq(-1.038,0.348,0.754,0.348,-1.038).
\]
For both softmax specifications, we use
\[
    A_r\sim\mathcal N(a_r,0.5^2),
    \qquad B_r\sim\mathcal N(0,0.25^2).
\]
At the reference vector $\widetilde u_{ir}=0$ for every $r$ and with $A_r=a_r$, the baseline weight is $b_0=0.15$ and the remaining $0.85$ is allocated according to $\pi_r$.
This is the allocation obtained by fixing the coefficients at their prior means, rather than the prior expectation of the normalized weights. 
Also, $0.15$ is a reference allocation for softmax, whereas it is a lower bound on the Gaussian baseline weight.

For BPS-S2, we additionally specify
\[
    C_r\sim\mathcal N(-0.173,0.08^2).
\]
At $B_r=0$ and $C_r=-0.173$, moving from
$\widetilde u_{ir}=0$ to $\widetilde u_{ir}=\pm2$ multiplies the component-to-baseline odds $w_{ir}/w_{i0}$ by $\exp\{4(-0.173)\}\simeq0.5$.
Thus, the prior favors local use of each input near its historical reference quantile. 
This interpretation concerns the odds evaluated at the prior mean coefficients; the normalized weights also depend on the scores of the competing inputs. The normal prior encourages, but does not impose, negative curvature.

\paragraph{Calibration parameters.}
All three BPS specifications use
\[
    \beta_r\sim\mathcal N(0,0.1^2),
    \qquad \tau_r^2\sim\mathrm{IG}(10,0.45),
    \quad r\in\mathcal R.
\]
These priors favor small location adjustments and have $E(\tau_r^2)=0.05$ on the log daily revenue scale.

\subsection{Computation, quantile matching, and predictive evaluation}
\label{supp:app-evaluation}

Let $\mathcal T$ and $\mathcal E$ denote the training and evaluation index sets for a given theatre, and let $\mathcal D_{\mathcal T}=\{(y_i,x_i):i\in\mathcal T\}$.
The synthesis parameters are common across programs within each theatre and are estimated using
\[
    \pi(\psi\mid\mathcal D_{\mathcal T})
    \propto\pi(\psi)\prod_{i\in\mathcal T}p_i(y_i\mid\psi),
\]
where $p_i$ is given in (\ref{eq:deg_ba}). 
For each theatre and BPS specification, we run MCMC for $100{,}000$ iterations and discard the first $25{,}000$ as burn-in. 
Since the inputs are point masses, the latent agent states are fixed at the reported values.

For $i\in\mathcal E$, the posterior predictive density is
\begin{equation*}
    \overline p_i(y)
    =\int p_i(y\mid\psi)
    \pi(\psi\mid\mathcal D_{\mathcal T})\,d\psi
    \simeq\frac{1}{M}\sum_{m=1}^{M}p_i(y\mid\psi^{(m)}),
\end{equation*}
where $\psi^{(1)},\ldots,\psi^{(M)}$ are retained posterior draws used for prediction. 

For the QM benchmark, we fit a normal distribution to each program's reported quantiles by solving
\begin{equation*}
    (\widehat\mu_i,\widehat\sigma_i)
    =\arg\min_{\mu,\,\sigma>0}
    \sum_{r\in\mathcal R}
    \left[x_{ir}-\{\mu+\sigma\Phi^{-1}(r)\}\right]^2,
\end{equation*}
and set
$p_i^{\mathrm{QM}}(y)=\mathcal N(y\mid\widehat\mu_i,\widehat\sigma_i^2)$.
The Low and High reports are assigned the nominal probability levels $0.01$ and $0.99$ for this fit, as in the BPS prior construction.

For a predictive density $f_i$ and its distribution function $F_i$, the reported scores are
\begin{align*}
    \mathrm{LogS}
    &=\frac{1}{|\mathcal E|}
      \sum_{i\in\mathcal E}\log f_i(y_i),\\
    \mathrm{CRPS}
    &=\frac{1}{|\mathcal E|}\sum_{i\in\mathcal E}
      \int_{-\infty}^{\infty}
      \left\{F_i(t)-\mathbf 1(y_i\leq t)\right\}^{2}\,dt.
\end{align*}
We use $f_i=\overline p_i$ for BPS and $f_i=p_i^{\mathrm{QM}}$ for QM. 
Thus, the BPS log score is the log of the posterior predictive density, with parameter uncertainty integrated out before taking the logarithm. All scores refer to log daily revenue.

\subsection{Additional posterior and predictive summaries}
\label{supp:app-additional}

For BPS-S2, the posterior medians of the curvature coefficients $C_r$ range approximately from $-0.182$ to $-0.164$ for Brattle and from $-0.182$ to $-0.163$ for Harvard Square. 
All five component scores therefore have negative curvature at these posterior
summaries. 
These values are close to the prior mean of $-0.173$, and the retained local structure should be interpreted in light of that prior specification. 
For BPS-S1, the posterior slope summaries are generally near zero for both theatres, consistent with the mild input dependence in Figure~\ref{fig:app-weights}.

The weight plots use a common scalar grid and evaluate each posterior draw at $x_i=(g,g,g,g,g)^{\top}$. 
The transformations in (\ref{eq:app-standardization}) are applied separately for each input at every grid point. This diagonal is a diagnostic slice of the weight functions; the program-specific predictions instead use the observed vectors of ordered quantile reports.
Curves are pointwise posterior medians and bands span the 15th to 85th posterior quantiles. 
The weights sum to one at every posterior draw, although their separately computed posterior medians need not do so.

In the predictive-density figures, the BPS curves average the conditional densities $p_i(y\mid\psi)$ over posterior draws.
At each $y$, the bands span the 15th to 85th posterior quantiles of these conditional density values. They are 70\% pointwise posterior credible bands for the density, rather than prediction intervals for a future revenue observation.

Figure~\ref{fig:app-density-harvard} shows Harvard Square Program~68, the first observation in its evaluation sample.
All three BPS specifications give higher log densities and smaller CRPS values than QM. 
For this program, BPS-G attains both the highest log density, $-0.0152$, and the smallest CRPS, $0.1287$.
Thus, the method ranking for an individual program can differ from the ranking of average scores in Table~\ref{tab:app-performance}.

\begin{figure}[htbp]
\centering
\includegraphics[width=0.88\linewidth]
{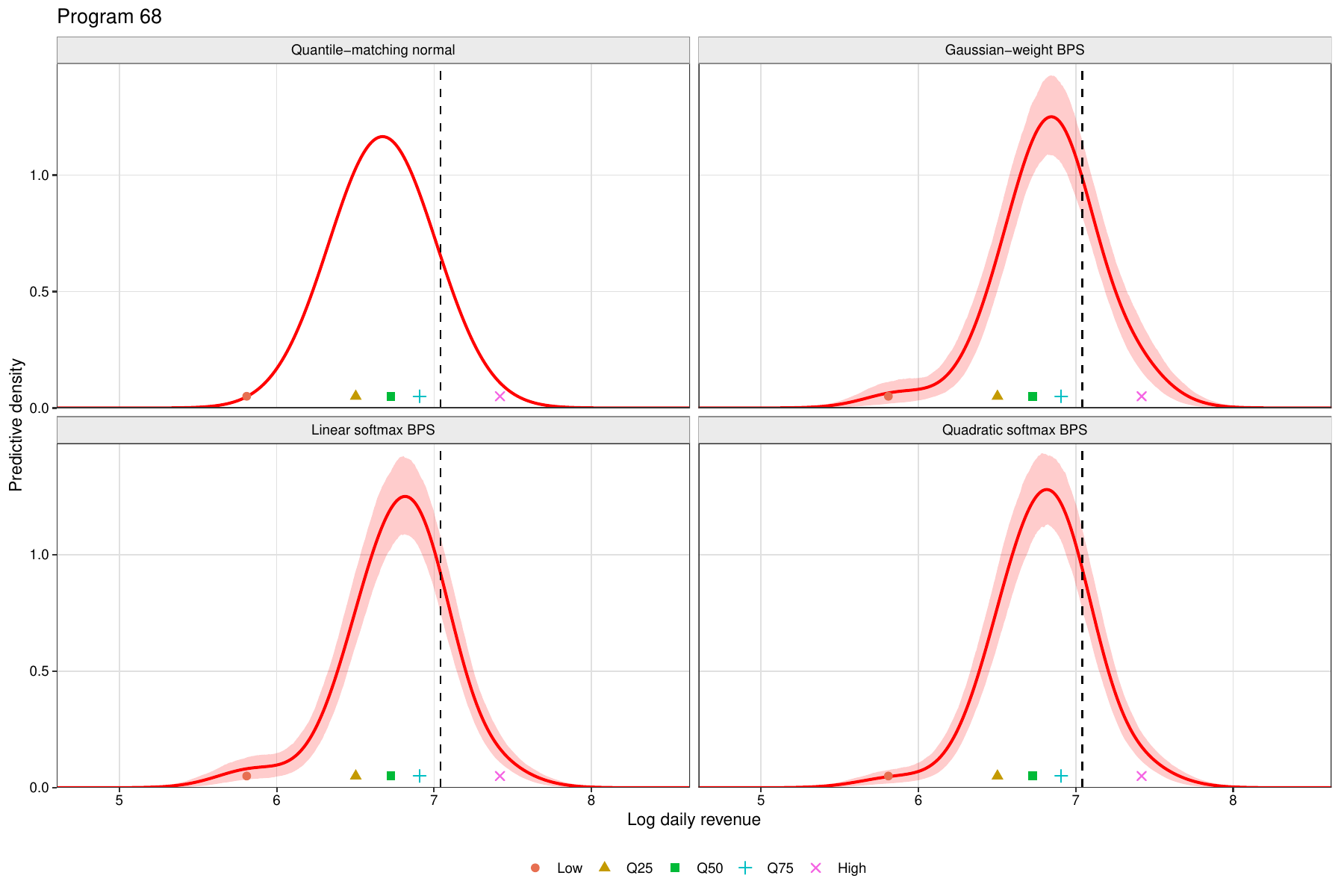}
\caption{Predictive densities for Harvard Square Theatre, Program~68.
For BPS, red curves are posterior means of the conditional predictive
densities and shaded regions are the 15th--85th pointwise posterior
quantile bands. The QM panel shows the fitted normal density.
Markers indicate the five reported inputs, and the vertical dashed
line marks the realized log daily revenue.}
\label{fig:app-density-harvard}
\end{figure}

\end{document}